\pdfoutput=1
\documentclass[
superscriptaddress,
nofootinbib,longbibliography,
onecolumn,
notitlepage,
aps]
{revtex4-2}

\usepackage{amsmath,amsfonts,amssymb,amsthm,amsbsy,mathtools}

\usepackage{graphicx}
\usepackage{dcolumn}
\usepackage{bm}
\usepackage{physics} 
\usepackage{bbold}
\usepackage{mathtools}
\usepackage{xcolor}
\usepackage{tikz}
\usepackage{tikz-cd}
\usepackage{comment}
\usepackage{subfigure}
\usepackage{comment}
\usepackage{soul}
\usepackage{framed}
\usepackage{mdframed}
\usepackage{csquotes}
\usepackage{appendix}
\usepackage{hyperref}

\usepackage{natbib}
\usepackage[normalem]{ulem}
\usepackage{xcolor}
\usepackage{xspace}
\usepackage{ifthen}
\usepackage{braket}
\usepackage{mathrsfs}
\usepackage{thmtools, thm-restate}

\DeclareMathAlphabet{\mathscrbf}{OMS}{mdugm}{b}{n}

\newcommand{\showcomments}{true}

\newcommand{\feb}[1]%
{\ifthenelse{\equal{\showcomments}{true}}
{{\color{teal}{#1}}}{\xspace}}%

\newcommand{\acd}[1]%
{\ifthenelse{\equal{\showcomments}{true}}%
{{\color{magenta}{#1}}}{\xspace}}%

\newcommand{\car}[1]%
{\ifthenelse{\equal{\showcomments}{true}}%
{{\color{blue}{#1}}}{\xspace}}%

\newcommand{\vk}[1]%
{\ifthenelse{\equal{\showcomments}{true}}%
{{\color{orange}{#1}}}{\xspace}}%

\newcommand{\jnb}[1]%
{\ifthenelse{\equal{\showcomments}{true}}%
{{\color{black}{#1}}}{\xspace}}%

\newcommand{\hg}[1]%
{\ifthenelse{\equal{\showcomments}{true}}%
{{\color{olive}{#1}}}{\xspace}}%

\theoremstyle{plain}
\newtheorem{proposition}{Proposition}

\newtheorem{corollary}{Corollary}

\newtheorem{definition}{Definition}

\makeatletter
\newcommand{\bigplus}{%
  \DOTSB\mathop{\mathpalette\mattos@bigplus\relax}\slimits@
}
\newcommand\mattos@bigplus[2]{%
  \vcenter{\hbox{%
    \sbox\z@{$#1\sum$}%
    \resizebox{!}{0.9\dimexpr\ht\z@+\dp\z@}{\raisebox{\depth}{$\m@th#1+$}}%
  }}%
  \vphantom{\sum}%
}
\makeatother

\begin{document}
\title{Causality is rare:\\ some topological properties of causal quantum channels}
\author{Robin Simmons}
\email{robin.simmons@univie.ac.at}
\affiliation{University of Vienna, Faculty of Physics, Vienna Doctoral School in Physics, and Vienna Center for Quantum Science and Technology (VCQ), Boltzmanngasse 5, A-1090 Vienna, Austria}
\affiliation{Institute for Quantum Optics and Quantum Information (IQOQI),
Austrian Academy of Sciences, Boltzmanngasse 3, A-1090 Vienna, Austria}
\begin{abstract}
    Sorkin's impossible operations demonstrate that causality of a quantum channel in QFT is an additional constraint on quantum operations above and beyond the locality of the channel. What has not been shown in the literature so far is how much of a constraint it is. Here we answer this question in perhaps the strongest possible terms: the set of (normal) causal channels is nowhere dense in the set of local (normal) channels. We connect this result to quantum information, showing that the set of causal unitaries has Haar measure $0$ in the set of all unitaries acting on a lattice. Finally, we close with a discussion of the implications and connections to recent QFT measurement models. 
\end{abstract}
\maketitle

\tableofcontents
\section{Introduction}
Sorkin's impossible operations \cite{sorkin1993impossible} point out a tension with the standard interpretation of algebras in quantum theory, and specifically algebraic quantum field theory (AQFT). Impossible operations are local---meaning acting non-trivially on only a bounded subset of any Cauchy hypersurface---quantum channels that allow superluminal signalling, and so are natural to exclude, leading to the idea that \textit{local} quantum channels need not be \textit{causal} quantum channels. Requiring that no physical channel can be acausal then suggests that the set of physical channels is a proper subset of all local channels. 

Building on this insight, several approaches have arisen. Borrowing techniques from quantum information, a detector-based approach has provided a number of concrete results regarding measurement channels and update results in QFT \cite{PhysRevD.105.065003,rel_causality_detec,particle_field_duality, PhysRevD.110.025013}. Blending algebraic QFT (AQFT) and quantum measurement theory, the Fewster-Verch (FV) framework has further formalised measurement theory in QFT \cite{fewster2020quantum}, and has been used to prove that every observable in a large class of QFTs can be measured in principle, and that measurement models based only on coupled QFTs are free from Sorkin's problems \cite{fewster_asymptotic_2023, PhysRevD.103.025017, mandrysch2024quantumfieldmeasurementsfewsterverch, fewster_measurement_2025}. Recently, several approaches have begun to investigate compositional properties of quantum evolutions in QFT, and their relation to causality \cite{oeckl2024spectraldecompositionfieldoperators, oeckl2025causalmeasurementquantumfield, simmons2025factorisationconditionscausalitylocal}. However, despite this flurry of research, very little is known about the set of causal quantum channels, outside of finite dimensional quantum mechanical analogies: see \cite{PhysRevA.64.052309, eggeling2002semicausal} for general statements about the relationship between causal and local operations, and \cite{PhysRevA.49.4331,PhysRevLett.90.010402, gisin2024towards} for proofs that all observables, even non-local, can be measured causally. In this paper, we aim to uncover some basic properties of the causal channels in QFT, although the methods here can be easily applied directly to other quantum models with much the same conclusions. The main informal question we wish to answer is
\begin{center}
    \textit{how rare is causality? }
\end{center}
Intuition from the quantum mechanical analogies suggests that it is quite rare, in the sense that picking a channel at random has a very small, in fact $0$ as we will show, probability of being causal with respect to a given tensor decomposition. However, formalising this statement in QFT requires significantly more effort. We will take \textit{nowhere dense} and \textit{meagre} as natural generalisations of the probabilistic notion of rare, where a nowhere dense set is a set where the interior of the closure is empty, and a meagre set can be written as a countable union of nowhere dense sets. Both of these notions can be thought of as expressing the rareness of a subset within a larger topological set. ´

In order to prove the main results of this work, we study the Banach space structure of (normal) completely bounded maps between von Neumann algebras, and their preduals, using operator space theory. This allows us to prove closure and compactness results for the convex set of (normal) unital completely positive maps, also known as quantum channels.

We pay special attention to normal channels, which are a natural choice from a quantum information standpoint. Normal channels have many of the properties that their finite dimensional counterparts have, such as a quantum Stinespring representation\footnote{All quantum channels with codomain $\mathcal{B}(\mathcal{K})$ have an algebraic Stinespring dilation, however the formulation in terms of an ancilla that is traced over requires normality.} and Kraus decomposition when their co-domain is $\mathcal{B}(\mathcal{K})$. Their motivation in the context of QFT is less obvious, in part due to the infinite number of non-unitarily equivalent representations of local $C^*$-algebras. In the sequel we will assume that the local algebras are von Neumann subalgebras of a fixed Hilbert space, a situation that is natural if there is a good choice of a vacuum state. From a topological perspective, the use of von Neumann algebras is significantly more convenient than $C^*$-algebras, which is why we mostly focus on them. We leave the relaxation of these choices to future work, and point the interested reader to \cite{redei_how_2010,redei_when_2010} for discussion on operational notions in QFT without the assumptions of von Neumann algebras and normal channels.

Using the topological results, we prove several ``rarity" properties of causal (normal) local channels within the set of all (normal) local channels in appropriate topologies. In the most general case, we prove that the causal (normal) local channels are nowhere dense in the set of all (normal) local channels. We also prove a probabilistic result that shows that picking a unitary acting on a lattice of finite dimensional systems (using the Haar measure) will be acausal with probability $1$. 
\begin{figure}[h]
    \centering
    \includegraphics[width=0.65\linewidth]{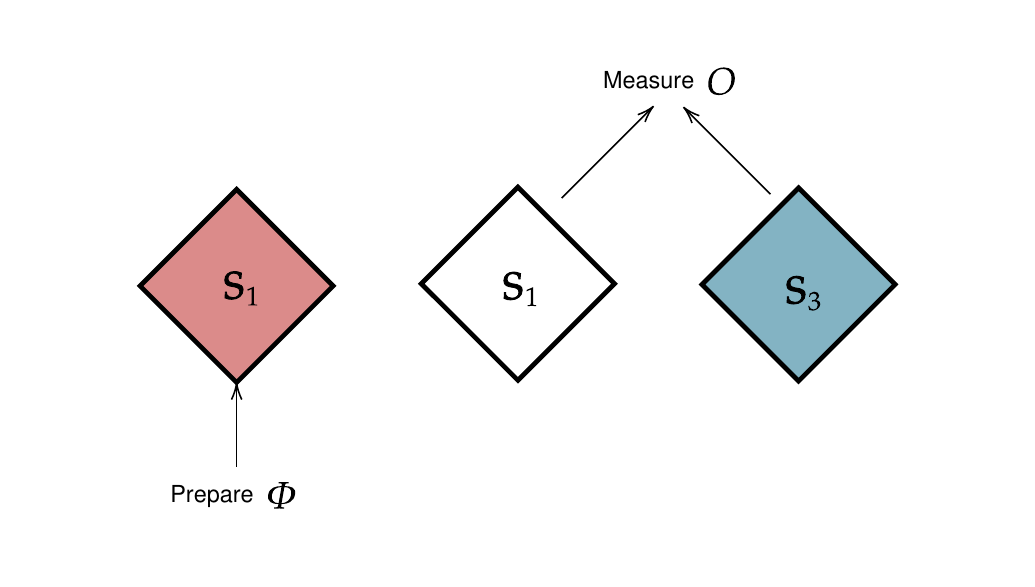}
    \caption{A spacetime scenario with $N=3$ spacelike separated systems. In this case, the first system locally prepares using $\Phi$, and the last two measure a joint observable $O$. We request that the effect of the preparation should be invisible to the expectation value of $U^\dagger OU$ if $U$ is causal.}
    \label{fig2}
\end{figure}
The paper is structured as follows. In section \ref{sec:motivation}, we outline the motivation from finite quantum information, including proving that the set of causal unitaries for a finite number of subsystems forms a measure $0$ subgroup of all unitaries. In section \ref{sec:results}, we state our main result, that normal causal channels are nowhere dense in normal local channels and sketch the proof. From there, the rest of the paper builds the background and propositions needed to prove the main result.

Sections \ref{sec:banach} to \ref{sec:closure_compact} introduce and develop the necessary Banach and operator space theory to approach theorem \ref{thrm:causal_meagre}, while section \ref{sec:qft} recalls the definition of an algebraic QFT (AQFT) and the formulation of Sorkin's impossible operations therein. A reader comfortable with Banach and operator space theory can skip to section \ref{sec:channels_topo}, where non-standard notations are introduced. Many of the results on the topological properties of quantum channels may also be of some general interest, as we are not aware of their statement elsewhere. 

\section{Motivation}\label{sec:motivation}
Finding causal channels by hand, for example in QFT, is highly \textit{non-trivial}, and most obvious choices are acausal, see \cite{PhysRevD.105.025003, albertini2023ideal, simmons_brukner}. For unitary channels on a finite collection of finite Hilbert spaces that are assumed to be spacelike separated, this rareness can be sharply stated, which we do now. Consider $N\geq 2$ spacelike separated systems, described by finite dimensional Hilbert spaces of dimension $\infty >d_1, d_2,\dots d_N>1$, see Fig \ref{fig2}, and so causality implies that \textit{no signals can be sent between the systems}. The full set of unitaries that act on the set of systems is $\text{U}(d_1d_2\dots d_N)$, while the set of unitaries that do not allow signals to be sent between the systems, and are thus causal, are those that can be written as 
\begin{equation}\label{eq:local_unit}
    U=U_1\otimes U_2\otimes\dots \otimes U_N,
\end{equation}
i.e. the local unitaries. By not being able to send a signal, we mean for any partition of the $N$ systems into $M$ and $N-M$ systems, any preparation channel $\Phi$ on the first partition and any operator $O$ on the second partition,
\begin{align}
    \Phi&=\Phi_M\otimes \text{id}_{N-M}, \;\; \Phi_i\in \text{UCP}\left(\bigotimes_{i=1}^M M_{d_i}\right), \\O&=\mathbb{I}_M\otimes O_{N-M}, \:\: O_{N-M}\in \bigotimes_{j=M+1}^{N}M_{d_j}(\mathbb{C}), 
\end{align}
we have
\begin{equation}\label{eq:qm_causal}
    \text{tr}(\rho \Phi(U^\dagger O U))-\text{tr}(\rho U^\dagger O U)=0,
\end{equation}
for all states $\rho$. Eq \eqref{eq:qm_causal} is a specific form of Sorkin's causality condition \cite{sorkin1993impossible}, also sometimes known as causal transparency. We will see the full QFT version, which takes the same form, in section \ref{sec:qft}.  Informally, we imagine $M$ systems conspiring beforehand to prepare a state with $\Phi$ in their shared past, and $N-M$ systems agreeing to meet in their shared future to measure $O$, again see Fig \ref{fig2}. Note we are working in Heisenberg picture, so the composition of maps is backwards with respect to time. Further, $\text{UCP}$ is the set of all unital completely positive maps, which for finite dimensions are simply the duals to the completely positive trace preserving channels on density operators. The requirement that the preparation $\Phi$ drops out of the expectation value of $U^\dagger OU$ leads us to conclude that $U$ factorises into a local unitary \cite{PhysRevA.64.052309}. 

Viewing the local unitaries as a subgroup of $\text{U}(d_1d_2\dots d_N)$, denoted $\text{U}(d_1, d_2, \dots, d_N)$, we must account for the phase ambiguity in the decomposition in Eq \eqref{eq:local_unit}, i.e. that $U_i\mapsto U_ie^{i\theta_i}, U\mapsto U$, if $\theta_i\in [0,2\pi)$ and $\sum_{i=1}^N \theta_i \text{ mod } 2\pi=0$. Hence, 
\begin{equation}
    \text{U}(d_1, d_2, \dots, d_N)\cong \text{U}(d_1)\times \text{U}(d_2)\times \dots \text{U}(d_N)/\text{U}(1)^{N-1}.
\end{equation}
We can now rephrase the question \textit{  how small is the set of causal unitaries\footnote{Since the causal unitary channels are defined by unitary operators only up to a phase, the set of unitary channels is in fact $\cong \text{U}(d_1)\times \text{U}(d_2)\times \dots \text{U}(d_N)/\text{U}(1)^{N}$, however, by the monotonicity of measures, this does not change our conclusions. } on $N$ finite dimensional spacelike separated subsystems?} as \textit{how small is $\text{U}(d_1, d_2, \dots, d_N)\cong \text{U}(d_1)$ in $\text{U}(d_1d_2\dots d_N)$?} We first answer this question for $N=2$, and bootstrap it to general $N$.
\begin{proposition}\label{prop:measure_causal}
    Let $n,m>1$ and define $\operatorname{U}(nm)\supseteq\operatorname{U}(n,m)\cong (\operatorname{U}(n)\times \operatorname{U}(m))/\operatorname{U}(1)$, then $\mu(\operatorname{U}(n,m))=0$ where $\mu$ is the Haar measure for $\operatorname{U}(nm)$.

    \begin{proof}
        Firstly, we note that $\text{U}(n,m)\subsetneq \text{U}(nm)$ is a strict closed subgroup of dim $n^2+m^2-1< n^2m^2$ when $n,m>1$. Since $\text{U}(nm)$ is connected, $\text{U}(n,m)$ cannot also be open. By the Steinhaus-Weil theorem, given a locally compact group $G$, a positive Haar measure $\mu_G(H)>0$ for a closed subgroup $H$ implies that the subgroup is also open. Since $\text{U}(n,m)$ is not open, $\mu(\text{U}(n,m))=0$.
    \end{proof}
\end{proposition}
This result leads us to conclude that the causal unitaries on two finite dimensional spacelike separated systems have measure $0$ in the set of all unitaries. We can apply proposition \ref{prop:measure_causal} to characterise the causal unitaries  for $N\geq 2$. Given $N$ sites each with local dimension $\infty>d_i>1$, the set of causal unitaries, i.e. unitaries that do not signal between the subsystems, is the closed strict subgroup $\text{U}(d_1,d_2,\dots,d_N)$, which obeys
\begin{equation}
    \text{U}(d_1d_2\dots d_N)\supsetneq\text{U}(d_1,d_2,\dots,d_N)\cong \text{U}(d_1)\times \dots \text{U}(d_N)/\text{U}(1)^{N-1}.
\end{equation}
Notably, we have the following strict inclusions of closed subgroups, 
\begin{equation}
    \text{U}(d_1d_2\dots d_N)\supsetneq \text{U}(d_1, d_2d_3\dots d_N)\supsetneq\text{U}(d_1,d_2,\dots, d_N)
\end{equation}
and so by proposition \ref{prop:measure_causal} and monotonicity of measures we have 
\begin{corollary}\label{corr:causal_unitaries_measure_0} For $N\in \mathbb{N}$, $1<d_1, d_2,\dots d_N<\infty$,
    \begin{equation}
    \mu(\text{U}(d_1,d_2,\dots, d_N))\leq \mu(\text{U}(d_1, d_2d_3\dots d_N))=0.
\end{equation}
\end{corollary}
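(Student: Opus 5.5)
The plan is to reduce the $N$-site statement to the bipartite case of Proposition \ref{prop:measure_causal} by grouping sites, and then finish with monotonicity of the Haar measure $\mu$ on $\text{U}(d_1d_2\dots d_N)$. Throughout we identify $\mathbb{C}^{d_1d_2\dots d_N}\cong\mathbb{C}^{d_1}\otimes\mathbb{C}^{d_2\dots d_N}$, so the bipartite local group $\text{U}(d_1,d_2d_3\dots d_N)$ (unitaries of the form $U_1\otimes V$ with $V\in\text{U}(d_2\cdots d_N)$) is a well-defined subgroup of $\text{U}(d_1d_2\dots d_N)$. The case $N=1$ is degenerate: there is only one system and no signalling constraint. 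So I would take $N\geq 2$, which is the setting of the section.

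\textbf{Inclusion.} I would first establish
\begin{equation*}
\text{U}(d_1,d_2,\dots,d_N)\subseteq \text{U}(d_1,d_2d_3\dots d_N).
\end{equation*}
Take $U=U_1\otimes U_2\otimes\dots\otimes U_N$ and set $V:=U_2\otimes\dots\otimes U_N$. Then $V$ is unitary on $\mathbb{C}^{d_2\cdots d_N}$ and $U=U_1\otimes V$. Both sides are closed subgroups, since each is the continuous image of a compact group under the tensor-product map. Hence both are Borel sets and $\mu$ is defined on them.

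\textbf{Bipartite case.} Next I would apply Proposition \ref{prop:measure_causal} with $n=d_1>1$ and $m=d_2d_3\cdots d_N$. Since $N\geq 2$ and every $d_i>1$, we have $m\geq d_2>1$, so the hypotheses hold and $\mu(\text{U}(d_1,d_2\dots d_N))=0$. Monotonicity of the measure then gives the claimed inequality together with the vanishing of the right-hand side.

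\textbf{Main obstacle.} The only delicate point is the identification in the second step. The grouping of sites $2,\dots,N$ must be a fixed unitary identification of Hilbert spaces, so that the Haar measure on $\text{U}(d_1d_2\dots d_N)$ is unchanged. This holds because Haar measure is invariant under conjugation by the fixed reordering unitary, so Proposition \ref{prop:measure_causal} transfers verbatim. With that checked, the corollary follows directly.
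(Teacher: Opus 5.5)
Your proposal is correct and follows the paper's argument: you embed $\text{U}(d_1,\dots,d_N)$ in the bipartite local group $\text{U}(d_1,d_2\cdots d_N)$, apply Proposition \ref{prop:measure_causal} with $n=d_1$ and $m=d_2\cdots d_N>1$, and conclude by monotonicity of Haar measure. Your restriction to $N\geq 2$, which the paper's section assumes implicitly, is genuinely needed: for $N=1$ there is no bipartition, and the causal group is all of $\text{U}(d_1)$, which has measure $1$.
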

We have turned the question of the rareness of causality into a direct question about the measure theory of unitary groups, which is answered by corollary \ref{corr:causal_unitaries_measure_0}. The above result directly shows that almost all unitaries on a collection of spacelike separated subsystems are acausal, or equivalently the probability of picking a Haar random unitary and it being causal is $0$. 
\section{Results}\label{sec:results}
We wish to ask the same question for QFT. While corollary \ref{corr:causal_unitaries_measure_0} supports the intuition gleaned from QFT calculations, it cannot be directly generalised. Firstly, no Haar measure exists for infinite unitary groups, which are exactly those that arise in QFT. And secondly, there is no obvious choice of measure on the set of non-unitary quantum channels. In order to state similar results about the ``rarity" of causal channels in QFT, we require more sophisticated methods, based on topology rather than measure theory. Specifically, we show that causality picks out meagre (or even nowhere dense) subsets of local quantum channels. We work in the CB norm and weak$^*$-topology on the set of CB maps, as within these topologies the relevant sets of quantum channels are closed. 

The next few sections are dedicated to reviewing and developing the relevant Banach theory, and topological properties of quantum channels and completely bounded maps required to state our main results in section \ref{sec:rare}, relating the relative ``size" of the set of normal causal channels $\text{nCau}(\mathbf{K})$ within the set of normal channels that are local to $\mathbf{K}$, denoted $\text{nLoc}(\mathbf{K})$. Here, by locality, we mean that the channel acts trivially on all operators that are spacelike separated from $\mathbf{K}$.

\begin{restatable}{theorem}{nonunitarystuff}\label{thrm:causal_meagre}
Let $\mathfrak{A}$ be a $\sigma$-finite von Neumann QFT such that there exists at least one acausal channel in $\operatorname{nLoc}(\mathbf{K})$. For any compact subregion $\mathbf{K}$, the set of causal (normal) channels is nowhere dense in the set of local (normal) channels, i.e. $\operatorname{nCau}(\mathbf{K})$ is nowhere dense in $\operatorname{nLoc}(\mathbf{K})$ with respect to the $\operatorname{(}$CB-norm$\operatorname{)}$ weak$^*$, and pointwise weak and $\sigma$-weak topologies.
    
\end{restatable}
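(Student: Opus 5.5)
The plan is to show that $\operatorname{nCau}(\mathbf{K})$ is closed in $\operatorname{nLoc}(\mathbf{K})$ and has empty interior there. A closed set with empty interior is nowhere dense. We run the argument once for each topology listed in the statement: the CB-norm, the weak$^*$ topology on CB maps, and the pointwise weak and pointwise $\sigma$-weak topologies.

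\textbf{Closedness.} Sorkin's causality condition, in its QFT form from section \ref{sec:qft}, says that for every normal state $\omega$, every preparation $\Phi$ local to a region spacelike to the relevant part of $\mathbf{K}$, and every observable $O$ in a suitable causal-future region, we have
\begin{equation*}
\omega\bigl(\Phi(\mathcal{E}(O))\bigr)-\omega\bigl(\mathcal{E}(O)\bigr)=0 .
\end{equation*}
For fixed $\omega,\Phi,O$, the map $\mathcal{E}\mapsto \omega(\Phi(\mathcal{E}(O)))-\omega(\mathcal{E}(O))$ is continuous in the pointwise $\sigma$-weak topology, because $\Phi$ is normal and hence $\sigma$-weakly continuous. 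It is therefore also continuous in each of the finer topologies, including the CB-norm. So $\operatorname{nCau}(\mathbf{K})$ is an intersection of zero sets of continuous functionals, and hence closed. Here I will use the closure results of section \ref{sec:closure_compact}, which say that $\operatorname{nLoc}(\mathbf{K})$ is itself closed in these topologies, so that relative closures behave well. The weak$^*$ topology on normal CB maps is identified via the predual; on bounded sets it agrees with the pointwise $\sigma$-weak topology.

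\textbf{Empty interior.} This is the main step, and I will do it by explicit convex perturbation. Fix $\mathcal{E}\in\operatorname{nCau}(\mathbf{K})$ and let $\mathcal{A}\in\operatorname{nLoc}(\mathbf{K})$ be the acausal channel given by the hypothesis. Set
\begin{equation*}
\mathcal{E}_t=(1-t)\mathcal{E}+t\mathcal{A},\qquad t\in(0,1].
\end{equation*}
Normal unital completely positive maps form a convex set, and locality (acting trivially on the spacelike complement) is a linear condition preserved by convex combinations, so $\mathcal{E}_t\in\operatorname{nLoc}(\mathbf{K})$. The signalling functional is affine in the channel. Its value on $\mathcal{E}_t$ is therefore $t$ times its value on $\mathcal{A}$, since the $\mathcal{E}$ contribution vanishes by causality of $\mathcal{E}$. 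Choosing $\omega,\Phi,O$ to witness the acausality of $\mathcal{A}$ makes this value nonzero, so $\mathcal{E}_t$ is acausal for every $t>0$.

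We also have $\|\mathcal{E}_t-\mathcal{E}\|_{cb}=t\|\mathcal{A}-\mathcal{E}\|_{cb}\le 2t\to 0$, so $\mathcal{E}_t\to\mathcal{E}$ in CB-norm and hence in every coarser topology. Every neighbourhood of $\mathcal{E}$ thus contains an acausal local channel, and $\operatorname{nCau}(\mathbf{K})$ has empty interior in $\operatorname{nLoc}(\mathbf{K})$.

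The obstacle I expect is less the perturbation than the bookkeeping. Three points need checking. First, the causal condition is genuinely affine in $\mathcal{E}$, since the preparation $\Phi$ acts after $\mathcal{E}$ in the Heisenberg picture. Second, the $\sigma$-finiteness assumption is what makes the weak$^*$ topology on $\operatorname{nLoc}(\mathbf{K})$ coincide with the pointwise $\sigma$-weak topology, which the closedness argument relies on. Third, nowhere density should be stated relative to $\operatorname{nLoc}(\mathbf{K})$, using that it is closed.
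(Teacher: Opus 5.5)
Your proposal is correct and is essentially the paper's argument. The paper writes the causal channels as $C\cap L$, where $C=\operatorname{nLoc}(\mathbf K)$ is convex and $L$ is the intersection of the kernels of the continuous linear signalling functionals $\Gamma^\rho_{\Phi_{\mathbf S},O_{\mathbf R}}$, and then invokes Proposition~\ref{prop:descriptive_set_theory}(4), whose proof is exactly your segment $(1-t)\mathcal E+t\mathcal A$ toward an acausal channel; your explicit bound $\|\mathcal E_t-\mathcal E\|_{\mathrm{CB}}\le 2t$ handles all the topologies at once.

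Two side remarks in your proposal are inaccurate, though neither is load-bearing. First, $\operatorname{nLoc}(\mathbf K)$ need not be closed in the weak$^*$ or pointwise topologies, since normality can be lost under such limits (cf.\ Proposition~\ref{prop:closed_compactness_boundedness}(2)). Second, the agreement of the coarse topologies on channels, which you do need for the pointwise weak case, comes from the uniform bound $\|\Psi\|_{\mathrm{CB}}\le 1$ via Proposition~\ref{prop:all_topo}, not from $\sigma$-finiteness. Relative closedness of $\operatorname{nCau}(\mathbf K)$ in $\operatorname{nLoc}(\mathbf K)$ together with empty relative interior already gives relative nowhere density.
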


\noindent We take theorem $1$ as an affirmative answer to the main question posed. \textit{Causality is rare}, as both meagre and nowhere dense are strong statements about the ``rareness" of a subset. In fact, nowhere dense is sometimes referred to as \textit{rare}. To motivate the next few sections, we sketch the proof of theorem $1$. Firstly, we use the fact that a generalised form of Eq \eqref{eq:qm_causal} can equally be applied to any $\sigma$-weakly continuous completely bounded map. This allows us to view the causal channels as an intersection of \textit{local} channels with the kernel of a map from the completely bounded maps, defined by \eqref{eq:qm_causal}. After showing that the local channels are rare in the completely bounded maps, we use an inheritance property to finalise the proof.

\section{Banach and operator spaces}\label{sec:banach}
In order to answer our main question, we will study the following set of inclusions
\begin{equation}\label{eq:defining_inclusions}
    \text{nCau}(\mathbf{K})\subsetneq \text{nLoc}(\mathbf{K})\subsetneq \text{nUCP}(\mathfrak{A}(\mathbf{M}))\subsetneq \sigma \text{-CB}(\mathfrak{A}(\mathbf{M}))\subsetneq \text{CB}(\mathfrak{A}(\mathbf{M})),
\end{equation}
where $\text{CB}(\mathfrak{A}(\mathbf{M})), \sigma\text{-CB}(\mathfrak{A}(\mathbf{M}))$ are Banach spaces of completely bounded maps, and the remaining three subsets are convex (and to be defined later). This subsection recalls the properties of Banach spaces and completely bounded maps needed to define the sets above. We assume the reader is familiar with Banach spaces and the notion of a predual. 

In this section we recall the fundamental fact that the set of bounded maps between Banach spaces is itself a Banach space, and the set of completely bounded maps between two complete operator spaces is also a Banach space. Given a complex Banach space $V$, we denote by $M_{p,q}(V)$ the Banach space of $V$-valued $p\times q$ matrices, i.e. the elements $O=\{O_{ij}\}$. A map $\Psi:V\rightarrow V$ can be extended to a map $\Psi^{p,q}:M_{p,q}(V)\rightarrow M_{p,q}(V)$ by
\begin{equation}
    \{\Psi^{p,q}(O)\}_{i,j}=\Psi(O_{ij}).
\end{equation}
When $p=q=n$ we can equivalently denote this by $\Psi^n=\Psi\otimes \text{id}_n$, using the isomorphism $M_n(V)\cong V\otimes M_n(\mathbb{C})$. There exists a natural tower of matrix norms on the set of maps $\Psi:V\rightarrow V$,  $||\cdot||_n$, defined by
\begin{equation}
    ||\Psi||_{n}=||\Psi^n||=\sup_{\substack{v\in M_n(V),\\ ||v||_n\leq 1}} ||\Psi^n(v)||_n
\end{equation}

The supremum over all $n\geq 1$ defines the completely bounded norm $||\cdot||_\text{CB}$ on $\Psi$. If we consider the Banach space of bounded maps between von Neumann algebras, and its predual, we find $\text{B}(\mathfrak{A}, \mathfrak{B})_*\cong \mathfrak{A}\otimes_\pi\mathfrak{B}_*$ where $\otimes_\pi$ is the Banach space projective tensor product defined by the norm
\begin{equation}\label{eq:inf_b}
    ||u||_\pi=\inf\{\sum_i ||O_i||\;||\omega_i||\;| u=\sum O_i\otimes \omega_i, O_i\in \mathfrak{A}, \omega_i\in \mathfrak{B}_*\},
\end{equation}
where the infimum is understood as over all possible decompositions. In order to capture the completely bounded structure instead, we require the notion of operator spaces. An alternative viewpoint is motivated by noting the following equivalent form of the norm
\begin{equation}\label{eq:sup_b}
    ||u||_\pi=\sup \{|\phi(u)|\;|\phi\in (\mathfrak{A}\otimes \mathfrak{B}_*)^*, ||\phi||\leq 1\}.
\end{equation}
The norm required to instead pick out the completely bounded maps should satisfy
\begin{equation}\label{eq:sup_cb}
    ||u||_\wedge=\sup \{|\phi(u)|\;|\phi\in (\mathfrak{A}\otimes \mathfrak{B}_*)^*, ||\phi||_\text{CB}\leq 1\},
\end{equation}
where $||\phi||_{cb}$ is defined by the induced map $\mathfrak{A}\rightarrow \mathfrak{B}$. Looking over Eqs \eqref{eq:inf_b} to \eqref{eq:sup_cb}, it should not be too surprising that the definition of $||\cdot||_\wedge$ involves the matrix norms $M_p(\mathfrak{A}), M_q(\mathfrak{B}_*)$. With this motivation, we introduce operator spaces.

An operator space is a Banach space $V$ with a family of norms $||\cdot||_n, n\geq 1$, such that
\begin{enumerate}
    \item $||v\oplus w||_{n+m}=\max\{||v||_m, ||w||_n\}$,
    \item $||avb||_n \leq ||a||\;||v||_m||b||$
\end{enumerate}
for all $v\in M_m(V), w\in M_n(V)$. An operator space is complete if $M_n(V)$ are all complete. Von Neumann algebras $\mathfrak{A}\subseteq \mathcal{B}(\mathcal{H})$ are complete operator spaces with the obvious operator norms on $M_n(\mathfrak{A})$. Less obviously, the predual $\mathfrak{A}_*$ is also a complete operator space, see \cite{Effros2022-le}[page 44-45 and proposition 4.2.2]. A map $\varphi: V\rightarrow W$ between operator spaces is completely isometric if 
\begin{equation}
    \varphi_n:M_n(V)\rightarrow M_n(W)
\end{equation}
is an isometry for all $n$, and a complete isomorphism if $\varphi_n$ are isomorphisms for all $n$, and $||\varphi||_\text{CB}, ||\varphi^{-1}||_\text{CB}<\infty$. In fact, Eq \eqref{eq:cb_isomorphism} is an example of a complete isomorphism. The norm on $M_n(V\otimes W)$ given by 
\begin{equation}
    ||u||_\wedge=\inf_{p,q}\{||a||\,||b||\,||v||\,||w||\,|u=a(v\otimes w)b, \, a\in M_{n,pq}, b\in M_{pq,n}, v\in M_p(V), w\in M_q(W)\}
\end{equation}
where the infimum is over all $p,q$ and all decompositions $u=a(v\otimes w)b$. The space $V\widehat\otimes W$ is defined as the completion of the tensor product $V\circledcirc W$ of $V,W$ with respect to $||\cdot ||_\wedge$, and is a complete operator space if $V,W$ are too. We are ready to state a fundamental result in Banach and operator space theory.

\begin{proposition}\label{prop:completely_bounded_complete}
Let $(X,||\cdot ||_X)$ be a normed space, and $(Y, ||\cdot ||_Y)$ a Banach space. Then the set of bounded linear maps from $X$ to $Y$ is a Banach space with respect to the operator norm. If $X$ is an operator space and $Y$ a complete operator space, then the set of completely bounded linear maps from $X$ to $Y$ is a Banach space with respect to the completely bounded norm. 
\end{proposition}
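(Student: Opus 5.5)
The plan is to prove the bounded case first and then bootstrap it to the completely bounded case, which holds essentially for free. Let $\mathrm{B}(X,Y)$ denote the bounded linear maps with the operator norm $\|T\|=\sup_{\|x\|\le 1}\|Tx\|_Y$. That this is a norm is routine: homogeneity and the triangle inequality pass through the supremum, and $\|T\|=0$ forces $T=0$. So the only real content is completeness.

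For completeness, take a Cauchy sequence $(T_k)$ in $\mathrm{B}(X,Y)$. For each fixed $x\in X$ we have $\|T_kx-T_lx\|_Y\le\|T_k-T_l\|\,\|x\|_X$, so $(T_kx)$ is Cauchy in $Y$. Since $Y$ is complete, it converges to a limit, which we call $Tx$. Linearity of $T$ follows because limits commute with finite linear combinations. Next, choose $N$ such that $\|T_k-T_l\|<\epsilon$ for all $k,l\ge N$. For $\|x\|\le 1$ we then have $\|T_kx-T_lx\|<\epsilon$. Letting $l\to\infty$ gives $\|T_kx-Tx\|\le\epsilon$ uniformly in the unit ball. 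This shows two things at once: $T-T_k$ is bounded, so $T=T_k+(T-T_k)$ is bounded; and $\|T_k-T\|\le\epsilon$ for all $k\ge N$, so $T_k\to T$ in norm. Note that only completeness of $Y$ is used, not completeness of $X$.

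For the operator space statement, we apply the first part at every matrix level. The norm $\|\Psi\|_{\mathrm{CB}}=\sup_n\|\Psi^n\|$ is a norm on $\mathrm{CB}(X,Y)$: each $\|\Psi^n\|$ is a seminorm, a supremum of seminorms is a seminorm, and $\|\Psi\|_{\mathrm{CB}}\ge\|\Psi^1\|=\|\Psi\|$ makes it definite. Now let $(\Psi_k)$ be Cauchy in the CB norm. Since $\|\cdot\|\le\|\cdot\|_{\mathrm{CB}}$, the sequence is also Cauchy in $\mathrm{B}(X,Y)$, so by the first part it converges in operator norm to some bounded $\Psi$.

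It remains to upgrade this convergence to the CB norm. Fix $n$. The sequence $\Psi_k^n:M_n(X)\to M_n(Y)$ is Cauchy in $\mathrm{B}(M_n(X),M_n(Y))$, because $\|\Psi_k^n-\Psi_l^n\|\le\|\Psi_k-\Psi_l\|_{\mathrm{CB}}$. The space $M_n(Y)$ is complete, since $Y$ is a complete operator space; equivalently, all matrix norms on $M_n(Y)$ are equivalent to the max-entry norm up to constants depending on $n$, by the operator space axioms. Hence $\Psi^n_k$ converges to some limit, and because convergence is entrywise, that limit is $\Psi^n$. Now run the same $\epsilon$ argument as before, but uniformly in $n$: if $\|\Psi_k-\Psi_l\|_{\mathrm{CB}}<\epsilon$ for $k,l\ge N$, then for any $n$ and any $v\in M_n(X)$ with $\|v\|_n\le1$ we get $\|\Psi^n_kv-\Psi^n_lv\|_n<\epsilon$. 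Letting $l\to\infty$ gives $\|(\Psi_k-\Psi)^nv\|_n\le\epsilon$. Taking the supremum over $v$ and then over $n$ yields $\|\Psi_k-\Psi\|_{\mathrm{CB}}\le\epsilon$. This shows that $\Psi$ is completely bounded and that $\Psi_k\to\Psi$ in CB norm.

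The main obstacle is precisely this last step, namely making sure the limit map is genuinely completely bounded rather than merely bounded at every finite level $n$. The key is to take the Cauchy estimate from the CB norm, which is already uniform over all $n$, and pass to the limit pointwise before taking any suprema. This avoids any interchange of limit and $\sup_n$ that would otherwise be problematic.
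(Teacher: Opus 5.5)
Your proof is correct. For the CB case, the essential move is to take the Cauchy estimate $\|\Psi_k^n v-\Psi_l^n v\|_n\le\|\Psi_k-\Psi_l\|_{\mathrm{CB}}\|v\|_n$, which is uniform in $n$, and let $l\to\infty$ pointwise before taking suprema; identifying the level-$n$ limit with $\Psi^n$ is justified by $\|w_{ij}\|\le\|w\|_n\le\sum_{i,j}\|w_{ij}\|$, which follows from the operator space axioms. The paper gives no proof of this proposition and quotes it as a standard fact of Banach and operator space theory, so your argument supplies the textbook proof it relies on.
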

The immediate application of the above result is to completely bounded maps between von Neumann algebras, which are complete operator spaces. 
\begin{corollary}
    Let $\mathfrak{A,B}$ be von Neumann algebras. The normed linear spaces $(\operatorname{B(\mathfrak A, \mathfrak B), ||\cdot||_1),\;}(\operatorname{CB}(\mathfrak{A}, \mathfrak{B}), ||\cdot||_\text{CB})$ are Banach spaces.
\end{corollary}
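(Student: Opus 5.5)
The corollary follows by checking that von Neumann algebras meet the hypotheses of Proposition \ref{prop:completely_bounded_complete} and then applying it twice, once for bounded maps and once for completely bounded maps. No new analysis is needed. The only care required is to confirm that the matrix levels $M_n(\mathfrak{B})$ are complete, not just $\mathfrak{B}$ itself.

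For the first claim, take $X=\mathfrak{A}$ with its operator norm, which is in particular a normed space, and $Y=\mathfrak{B}$. Since $\mathfrak{B}\subseteq\mathcal{B}(\mathcal{K})$ is a von Neumann algebra, it is $\sigma$-weakly closed. The $\sigma$-weak topology is coarser than the norm topology, so $\mathfrak{B}$ is also norm closed in the Banach space $\mathcal{B}(\mathcal{K})$, and hence is a Banach space. The first half of Proposition \ref{prop:completely_bounded_complete} then gives that $\operatorname{B}(\mathfrak{A},\mathfrak{B})$ is a Banach space in the operator norm. This operator norm is exactly $||\cdot||_1$, because $\Psi^1=\Psi$.

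For the second claim, both $\mathfrak{A}$ and $\mathfrak{B}$ carry the operator space structure described in the text. Concretely, $M_n(\mathfrak{B})$ is identified with a subalgebra of $\mathcal{B}(\mathcal{K}\otimes\mathbb{C}^n)$ and given the operator norm there. This makes $\mathfrak{B}$ an operator space: the direct sum axiom and the bimodule inequality $||avb||_n\leq||a||\,||v||_m\,||b||$ are just the $C^*$-norm identities for block matrices in $\mathcal{B}(\mathcal{K}\otimes\mathbb{C}^n)$. The step to verify is completeness of each $M_n(\mathfrak{B})$. For this, note that $M_n(\mathfrak{B})=\mathfrak{B}\otimes M_n(\mathbb{C})$ is itself a von Neumann algebra on $\mathcal{K}\otimes\mathbb{C}^n$, with commutant $\mathfrak{B}'\otimes\mathbb{I}_n$. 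By the argument of the previous paragraph it is therefore norm closed, hence complete.

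Equivalently, one can argue entrywise. A matrix norm dominates each entry norm, and entry norms sum to bound it. So a Cauchy sequence in $M_n(\mathfrak{B})$ has Cauchy entries, and these converge in $\mathfrak{B}$ to a limit matrix. Thus $\mathfrak{B}$ is a complete operator space, and $\mathfrak{A}$ is an operator space. The second half of Proposition \ref{prop:completely_bounded_complete} then yields that $(\operatorname{CB}(\mathfrak{A},\mathfrak{B}),||\cdot||_\text{CB})$ is a Banach space.
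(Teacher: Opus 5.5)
Your proposal is correct and follows the paper's route: the paper obtains the corollary directly from Proposition \ref{prop:completely_bounded_complete}, using that von Neumann algebras are complete operator spaces. The only difference is that you verify completeness of each $M_n(\mathfrak{B})$ (via norm closure of the von Neumann algebra $M_n(\mathfrak{B})$, or entrywise), whereas the paper simply asserts it.
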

We note that the Banach dual space of $V$ is given by $\operatorname{CB(V, \mathbb{C}})$, as the completely bounded norm coincides with $||\cdot||$ for linear functionals, see \cite{Effros2022-le}[corollary 2.2.3]. Hence, we have
\begin{equation}\label{eq:cb_isomorphism}
    \operatorname{CB}(\mathfrak{A}, \mathfrak{B})\cong \operatorname{CB}(\mathfrak{A}, (\mathfrak{B}_*)^*)=\operatorname{CB}(\mathfrak{A}, \operatorname{CB}(\mathfrak{B}_*, \mathbb{C})),
\end{equation}
where we have used $\mathfrak{B}\cong (\mathfrak{B}_*)^*$. Then, by \cite{Effros2022-le}[proposition 7.1.2], for any operator spaces $V,W,X$ we have the following complete isomorphism
\begin{equation}
    \text{CB}(V\widehat\otimes W, X)\cong \text{CB}(V, \text{CB}(W,X)),
\end{equation}
and thus using Eq \eqref{eq:cb_isomorphism},
\begin{align}\label{eq:cb_predual}
     \operatorname{CB}(\mathfrak{A}, \mathfrak{B})\cong \text{CB}(\mathfrak{A}, \text{CB}(\mathfrak{B}_*, \mathbb{C}))\cong \operatorname{CB}(\mathfrak{A}\widehat{\otimes}\mathfrak{B}_*, \mathbb{C})=(\mathfrak{A}\widehat{\otimes}\mathfrak{B}_*)^*.
\end{align}
Having defined operator spaces, and recalled the predual of the set of completely bounded linear maps, we can continue to the topological properties of completely bounded maps and quantum channels.

\section{Quantum channels and topology}\label{sec:channels_topo}
As with the previous section, the following two subsections are required to define sets appearing in Eq \eqref{eq:defining_inclusions}, this time the middle and second from right. In subsection \ref{subsec:topo}, we recall the definitions of several important topologies in operator and Banach theory, allowing us to define $\sigma$-$\text{CB}(\mathfrak{A}, \mathfrak{B})$ while in subsection \ref{subsec:channels}, we recall the definition of a quantum channel, and define the convex set of normal quantum channels $\text{nUCP}(\mathfrak{A}, \mathfrak{B})$. 
\subsection{Topologies on algebras and maps}\label{subsec:topo}
A striking difference between infinite and finite dimensional operator algebras is the fact that operator topologies no longer coincide in infinite dimensions. This means that the seemingly harmless (in finite dimensions) statement:
\begin{equation}
    \braket{\psi|O_i\phi}\rightarrow \braket{\psi|O\phi}\; \forall \psi,\phi\in \mathcal{H}\implies O_i\ket{\xi}\rightarrow O\ket{\xi} \; \forall \xi \in \mathcal{H} 
\end{equation}
fails in general. Given a sequence (or net) of operators in a von Neumann algebra $O_i\in \mathfrak{A}\subseteq \mathcal{B}(\mathcal{H})$, it converges to $O$ in 
\begin{enumerate}
    \item norm topology if $||O_i-O||\rightarrow 0$,
    \item strong operator topology (SOT) if $O_i\ket{\psi}\rightarrow O\ket{\psi}$ for all $\psi\in \mathcal{H}$,
    \item and weak operator topology (WOT) if $\braket{\psi|O_i\phi}\rightarrow \braket{\psi|O\phi}$ for all $\psi, \phi\in \mathcal{H}$.
\end{enumerate}
We have listed the above topologies by decreasing strength in descending order, which can be confirmed by a simple exercise. The $\sigma$-weak (or ultraweak) topology on a von Neumann algebra, defines convergence of a net $O_i\rightarrow O$ if $\rho(O_i)\rightarrow \rho(O)$ for all normal functionals $\rho\in \mathfrak{A}_*$. It is weaker than norm topology, stronger than weak operator topology, and unordered with respect to strong operator topology.

Given a complex Banach space $V=(V_*)^*$, the weak$^*$-topology, denoted $\sigma(V,V_*)$ is the weakest topology on $V$ such that the map $T_v:V\rightarrow \mathbb{C}$
\begin{equation}
    T_v:\phi\mapsto \phi(x)
\end{equation}
are continuous. If we take $V=\mathfrak{A}_*, V^*=(\mathfrak{A}_*)^*\cong \mathfrak{A}$, then the weak$^*$ topology on $\mathfrak{A}$, denoted $\sigma( \mathfrak{A},\mathfrak{A}_*)$ is the weakest topology such that 
\begin{equation}
    T_\rho:O\mapsto \rho(O)
\end{equation}
is continuous for all $\rho$. Hence, it coincides with the $\sigma$-weak topology. A linear map $\Psi:\mathfrak{A}\rightarrow \mathfrak{B}$ is continuous in any of the above listed topologies if it sends convergent sequences (or nets when appropriate) to convergent sequences (or nets), e.g. $\Psi$ is $\sigma$-weakly (or weak$^*$) continuous if for all $O_i\rightarrow O$ $\sigma$-weakly, $\Psi(O_i)\rightarrow \Psi(O)$ $\sigma$-weakly. 

We will also use several topologies on spaces of linear maps. The operator topologies we have discussed induce pointwise operator topologies on linear maps: e.g. $\Psi_i:\mathfrak{A}\rightarrow \mathfrak{B}$ is pointwise convergent in $\sigma$-weak operator topology if $\rho(\Psi_i(O))\rightarrow \rho(\Psi(O))$ for all $\rho\in \mathfrak{B}_*, O\in \mathfrak{A}$. Two other topologies stand out for studying quantum channels, and completely bounded maps more generally. Let $\text{CB}(\mathfrak{A,B})$ be the Banach space of completely bounded maps. Then the norm topology $||O_i-O||_\text{CB}\rightarrow 0$ and weak$^*$ topology $\sigma(\text{CB}(\mathfrak{A,B}), \text{CB}(\mathfrak{A,B})_*)$ are of particular interest due to our interest in $\sigma$-continuous quantum channels. Firstly, however, we must show that $\sigma\text{-CB}(\mathfrak{A,B})$ is a Banach space. 

\begin{restatable}{proposition}{cbclosure}\label{prop:closed_sigma_cont}
    The set of $\sigma$-weakly continuous maps in $\operatorname{CB}(\mathfrak{A,B})$, denoted $\sigma\operatorname{-CB}(\mathfrak{A,B})$ is norm closed, and thus also a Banach space. 
\end{restatable}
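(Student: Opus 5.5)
We need to show that if $\Psi_n\in\sigma\text{-CB}(\mathfrak{A,B})$ and $\|\Psi_n-\Psi\|_\text{CB}\to 0$, then $\Psi$ is $\sigma$-weakly continuous. Completeness then follows because a closed subspace of a Banach space is Banach. Linearity of $\sigma\text{-CB}(\mathfrak{A,B})$ is immediate, since sums and scalar multiples of $\sigma$-weakly continuous maps are $\sigma$-weakly continuous. The ambient space $\text{CB}(\mathfrak{A,B})$ is Banach by Proposition \ref{prop:completely_bounded_complete} and its corollary.

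Fix $\rho\in\mathfrak{B}_*$ and consider the functionals $\rho\circ\Psi_n$ and $\rho\circ\Psi$ on $\mathfrak{A}$. For a bounded linear map between von Neumann algebras, $\sigma$-weak continuity is equivalent to $\rho\circ\Psi\in\mathfrak{A}_*$ for every $\rho\in\mathfrak{B}_*$. This is the standard characterisation of normal maps: $\sigma(\mathfrak{A},\mathfrak{A}_*)$–$\sigma(\mathfrak{B},\mathfrak{B}_*)$ continuity of a linear map means exactly that it pulls back each coordinate functional $T_\rho$ to a $\sigma$-weakly continuous functional. So it suffices to show $\rho\circ\Psi\in\mathfrak{A}_*$. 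We have
\begin{equation}
\|\rho\circ\Psi_n-\rho\circ\Psi\|_{\mathfrak{A}^*}\leq\|\rho\|\,\|\Psi_n-\Psi\|\leq \|\rho\|\,\|\Psi_n-\Psi\|_\text{CB}\to 0 .
\end{equation}
Here we used $\|\cdot\|=\|\cdot\|_1\leq\|\cdot\|_\text{CB}$. Thus $\rho\circ\Psi$ is a norm limit in $\mathfrak{A}^*$ of elements $\rho\circ\Psi_n\in\mathfrak{A}_*$.

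The key step is that $\mathfrak{A}_*$ is norm closed in $\mathfrak{A}^*$. I would prove this directly rather than cite it. Let $O_i\to O$ $\sigma$-weakly be a net. The net need not be bounded, so instead use that a functional is $\sigma$-weakly continuous iff its kernel intersected with the unit ball is $\sigma$-weakly closed (Krein–Smulian). More simply, use the equivalent characterisation that $\sigma$-weakly continuous functionals are precisely those continuous on the unit ball. On the unit ball the estimate $|\varphi(O_i)-\varphi(O)|\leq 2\|\varphi-\varphi_n\|+|\varphi_n(O_i)-\varphi_n(O)|$ is uniform and gives continuity by an $\varepsilon/2$ argument. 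Alternatively, cite that $\mathfrak{A}_*$ is a Banach space (the predual, with the norm inherited from $\mathfrak{A}^*$), hence a complete and thus closed subspace of $\mathfrak{A}^*$. This citation is the cleanest route and is what I would use.

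Hence $\rho\circ\Psi\in\mathfrak{A}_*$ for all $\rho$, so $\Psi$ is $\sigma$-weakly continuous and $\sigma\text{-CB}(\mathfrak{A,B})$ is norm closed. The only real obstacle is justifying the normality characterisation and the closedness of $\mathfrak{A}_*$. Both are standard. The bounded-net subtlety is avoided by appealing to completeness of the predual, with the CB norm used only through the trivial bound $\|\cdot\|\le\|\cdot\|_\text{CB}$.
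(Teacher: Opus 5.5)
Your proof is correct, but it is organised differently from the paper's. The paper works directly with a $\sigma$-weakly convergent net $O_i\to O$ in $\mathrm{Ball}(\mathfrak{A})$. It uses that $\Psi_n\to\Psi$ uniformly on bounded sets to interchange $\lim_i$ and $\lim_n$ (Moore--Osgood), and concludes that $\rho\circ\Psi$ is $\sigma$-weakly continuous on the ball. It then passes to all of $\mathfrak{A}$; the step phrased there as ``extended uniquely'' is really the Krein--\v{S}mulian characterisation of normal functionals by their restriction to the ball.

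You instead move everything into the dual. There $\rho\circ\Psi_n\to\rho\circ\Psi$ in the norm of $\mathfrak{A}^*$, and the only nontrivial input is that $\mathfrak{A}_*$ is norm closed in $\mathfrak{A}^*$. This is exactly the argument the paper itself uses later, in part 3 of Proposition~\ref{prop:closed_compactness_boundedness}, to show that CB-norm limits of normal channels are normal.

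Your route buys three things. It is shorter. It makes transparent that only $\|\cdot\|\le\|\cdot\|_{\mathrm{CB}}$ is used, so the same closure holds for bounded maps in operator norm. And it states the ball-to-global step explicitly rather than leaving it implicit. The paper's route is more hands-on, but your own $\varepsilon$-estimate on the unit ball shows it is essentially a re-proof of the closedness of the predual in this instance. It still rests on the same Krein--\v{S}mulian-type fact.

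Citing completeness of $\mathfrak{A}_*$ is legitimate. The canonical embedding of the predual into $\mathfrak{A}^*$ is isometric, and its image is exactly the $\sigma$-weakly continuous functionals. So completeness of $\mathfrak{A}_*$ and its norm closedness in $\mathfrak{A}^*$ are the same statement.
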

\begin{proof}\label{proof:norm_closed}
        Let $\Psi_n\rightarrow \Psi$ be a CB-convergent sequence of $\sigma$-weakly continuous maps in $\text{CB}(\mathfrak{A,B})$, and $O_i\rightarrow O$ a $\sigma$-weakly convergent net in $\text{Ball}(\mathfrak{A})$. Consider any norm bounded set $S\subset \mathfrak{A}$, then
    
        \begin{equation}
            \sup_{O\in S}||\Psi_n(O)-\Psi(O)||\leq C||\Psi_n-\Psi||_\text{CB}\rightarrow 0,
        \end{equation}
        so $\Psi_n\rightarrow \Psi$ uniformly on a bounded set $S$. By assumption, the net $O_i$ is norm-bounded, and so $\Psi_n\rightarrow \Psi$ uniformly on the net. For any $\rho\in \mathfrak{B}_*$
        \begin{align}
            \lim_{i\rightarrow \infty}\rho(\Psi(O_i))=\lim_{i\rightarrow \infty}\lim_{n\rightarrow \infty} \rho(\Psi_n(O_i))=\lim_{n\rightarrow \infty}\lim_{i\rightarrow \infty} \rho(\Psi_n(O_i))=\lim_{n\rightarrow \infty} \rho(\Psi_n(O))= \rho(\Psi(O)),
        \end{align}
        where we have swapped the limits using uniform convergence (Moore-Osgood), then used that $\Psi_n$ are $\sigma$-weakly continuous for the second equality, and that CB-convergence implies pointwise $\sigma$-weak convergence for the last equality. This implies that $\rho\circ \Psi:\text{Ball}(\mathfrak{A})\rightarrow \mathbb{C}$ is $\sigma$-weakly continuous, and thus $\sigma$-weakly continuous when extended uniquely to $\mathfrak{A}\rightarrow \mathbb{C}$. Since $\rho\in \mathfrak{B}_*$ is generic, $\Psi\in \sigma\text{-CB}(\mathfrak{A,B})$.  Linearity is immediate, and so $\sigma\text{-CB}(\mathfrak{A,B})$ is Banach. 
    \end{proof}

With the above result, we see that $\sigma\operatorname{-CB}(\mathfrak{A,B})\subseteq \operatorname{CB}(\mathfrak{A,B})$ is a closed subspace. We now compare the $\sigma(\operatorname{CB}(\mathfrak{A}, \mathfrak B), \operatorname{CB}(\mathfrak{A}, \mathfrak B)_*)$ and pointwise topologies. 

\begin{proposition}\label{prop:all_topo}
    Let $\mathfrak A, \mathfrak B$ be von Neumann algebras. Then on $\operatorname{CB}(\mathfrak{A},\mathfrak B)$
    \begin{enumerate}
        \item we have the following ordering of topologies: pointwise weak $\leq $ pointwise $\sigma$-weak$\leq \sigma(\text{CB}(\mathfrak{A}, \mathfrak B), \text{CB}(\mathfrak{A}, \mathfrak B)_*)\leq$ CB norm. 
        \item on bounded subsets of $\operatorname{CB}(\mathfrak{A},\mathfrak B)$, the weakest three topologies coincide.
    \end{enumerate}
    \begin{proof}
        We start with statement $1$. It suffices to show that the respective weaker (semi)-norms are implied by the stronger ones.
        Suppose $\Psi_i\in \text{CB}(\mathfrak A, \mathfrak B)$ is a CB norm convergent net. By the definition of the dual norm, we have
        \begin{equation}
            ||\Omega||_\wedge||\Psi-\Psi_i||_\text{CB}=||\Omega||_\wedge\sup_{\Omega'\in \mathfrak{A}\widehat{\otimes}\mathfrak{B}_*, \;||\Omega'||_\wedge\leq 1}|\braket{ \Psi-\Psi_i, \Omega'}|\;  \geq |\braket{ \Psi-\Psi_i, \Omega}|
        \end{equation}
        for any $\Omega\in \mathfrak{A}\widehat{\otimes}\mathfrak{B}_*$. Hence, CB norm convergence implies weak$^*$. Since $\mathfrak{A}\circledcirc\mathfrak B_*\subseteq \mathfrak A\widehat\otimes \mathfrak B_*$, convergence under the seminorms $|\braket{O\otimes \rho, }|$ are implied by weak$^*$-convergence. Hence, pointwise $\sigma$-weak convergence follows from weak$^*$-convergence. Letting $\rho(\cdot)=\text{tr}(\ket{\psi}\bra{\phi}\cdot)$, it is clear that pointwise weak convergence is implied by pointwise $\sigma$-weak convergence. 

        We now move onto statement $2$. Let $S\subset \text{CB}(\mathfrak A, \mathfrak B)$ be CB norm bounded by $C$ and $\Psi_i$ a pointwise weakly convergent net in $S$. Let $\rho\in \mathfrak B_*$. Thanks to their norm density in $\mathcal B(\mathcal H)_*/\perp\cong \mathfrak B_*$, the following nets 
        \begin{equation}
            \rho_n(\cdot)=\text{tr}\left( \sum_{j=1}^{N_n} \ket{\psi^n_j}\bra{\phi^n_j}\cdot\right).
        \end{equation}
         can approximate $\rho$ in norm topology. Let $a_{i,n}=\braket{\Psi_i, O\otimes \rho_n}$ for any $O\in \mathfrak A$, then
         \begin{equation}
             |a_{i,n}-a_i|=|\braket{\Psi_i, O\otimes \rho_n-O\otimes \rho}|\leq C||O\otimes \rho_n -O\otimes \rho||_\wedge.
         \end{equation}

         Then
         \begin{align}
             \rho(\Psi(O))=\lim_{n\rightarrow \infty}\rho_n(\Psi(O))=\lim_{n\rightarrow \infty}\lim_{i\rightarrow \infty} \rho_n(\Psi_i(O))=\lim_{i\rightarrow \infty}\lim_{n\rightarrow \infty} \rho_n(\Psi_i(O))=\lim_{i\rightarrow \infty}\rho(\Psi_i(O)),
         \end{align}
         where the second equality follows from pointwise weak convergence and continuity, and the third equality follows by applying Moore–Osgood. Hence, $\Psi_i\rightarrow \Psi$ in pointwise $\sigma$-weak topology. 

         Now we assume that $\Psi_i \in S$ is pointwise $\sigma$-weakly convergent to $\Psi$. Pick $\Omega\in \mathfrak A\widehat \otimes \mathfrak B_*$ and a net $\Omega_n\in \mathfrak A\circledcirc\mathfrak B_*$, such that $||\Omega_n-\Omega||_\wedge\rightarrow 0$. Define $b_{i,n}:=\braket{\Psi_i, \Omega_n}$ and note that $|b_{i,n}-b_i|=|\braket{\Psi_i, \Omega_n-\Omega}|\leq C||\Omega_n-\Omega||_\wedge$. Then, as before, we see that 
         \begin{equation}
             \lim_{i\rightarrow \infty}\braket{\Psi_i, \Omega}=\braket{\Psi, \Omega}.
         \end{equation}
    \end{proof}
\end{proposition}

\subsection{Quantum channels}\label{subsec:channels}
Quantum channels are considered the most general evolution in quantum theory thanks to Stinespring's theorem, and have an elegant operator algebraic theory. Here, we recall a few elements of this theory.

A map $\Psi: \mathfrak{A}\rightarrow \mathfrak{B}$ between von Neumann algebras $\mathfrak{A,B}$ is positive if it maps positive elements to positive elements $\Psi(\mathfrak{A}^+)\subseteq \mathfrak{B}^+$. A map is completely positive if $\Psi^n$ is positive for all $n\geq 1$. A (always from here on Heisenberg) quantum channel is a linear map between von Neumann algebras $\Phi:\mathfrak{A}\rightarrow \mathfrak{B}$ that is unital $\Phi(1)=1$ and completely positive. We denote the set of such maps by $\text{UCP}(\mathfrak{A, \mathfrak{B}})$ which sits in $\text{CB}(\mathfrak{A}, \mathfrak{B})$ as a convex set:
\begin{equation}
    p\Phi_1+(1-p)\Phi_2\in \text{UCP}(\mathfrak{A}, \mathfrak{B})
\end{equation}
for all $\Phi_i\in \text{UCP}(\mathfrak{A}, \mathfrak{B}), p\in [0,1]$. A linear map $\Psi:\mathfrak{A}\rightarrow \mathfrak{B}$ is $\sigma$-weak continuous (or ultraweakly continuous) if and only it preserves the algebraic predual, i.e. if $\omega\circ\Psi\in \mathfrak{A}_*$ for all $\omega\in\mathfrak{B}_*$. Note that this coincides with the weak$^*$ topology on $\mathfrak{A}$. For positive maps, it also coincides with normality, which requires a map to satisfy $\sup_i\Psi(O_i)=\Psi(\sup_iO_i)$ for any increasing net $O_i$ of operators in $\mathfrak{A}$. We denote the subset of normal UCP maps with $\text{nUCP}$. All maps in $\text{nUCP}(\mathcal{B}(\mathcal H), \mathcal  {B}(\mathcal{K}))$ can be written in terms of a countable number of Kraus operators $K_i\in \mathcal{B}(\mathcal{K}, \mathcal{H})$ with 
\begin{equation}
    \Phi(O)=\sum_i K_i^\dagger OK_i, \;\; \sum_i K_i^\dagger K_i=1,
\end{equation}
where convergence is understood in strong operator topology. The question of extensions of maps from $\mathfrak A\subsetneq \mathcal{B}(\mathcal H)$ in the context of QFT, and locality of those extensions in quantum channels, is discussed in e.g. \cite{redei_how_2010,redei_when_2010}. We will not use the Kraus operator decomposition in our proofs, and so questions of existence when $\mathfrak{B}\subsetneq \mathcal{B}(\mathcal{K})$ are avoided.

\section{Closure and compactness}\label{sec:closure_compact}
Let $\mathcal{H, K}$ be separable Hilbert spaces, and for any von Neumann algebras $\mathfrak{A}\subseteq \mathcal{B}(\mathcal{H}), \mathfrak{B}\subseteq \mathcal{B}(\mathcal{K})$, denote the set of unital completely positive maps from $\mathfrak{A}$ to $\mathfrak{B}$ by $\text{UCP}(\mathfrak{A,B})$, and the normal UCP maps by $\text{nUCP}(\mathfrak{A,B})$. 

In order to answer questions about closure and compactness, we need to consider the ambient spaces quite carefully. Given a von Neumann algebra $\mathfrak{A}$, we denote the set of all linear maps as $\mathcal{L}(\mathfrak{A})$. By proposition \ref{prop:completely_bounded_complete}, $\text{CB}(\mathfrak{A})$ is a Banach space. Importantly, we can show that the set of all $\sigma$-weakly continuous maps in $\text{CB}(\mathcal{A})$ is also Banach. We do not claim any novelty in these results: result $2$ is stated without proof in \cite{Farenick_2013}, and related to, but different from, \cite{van2025pure}[remark 12]. The conclusion in statement $3$ can be reached from the results in \cite{kretschmann2007continuitytheoremstinespringsdilation}, however it does not appear explicitly there. 

\begin{restatable}{proposition}{test}\label{prop:closed_compactness_boundedness}
    For any von Neumann algebras $\mathfrak{A}\subseteq \mathcal{B}(\mathcal{H}), \mathfrak{B}\subseteq \mathcal{B}(\mathcal{K})$,
    \begin{enumerate}
        \item $\operatorname{UCP}(\mathfrak{A},\mathfrak{B})$ is compact, and closed in $\operatorname{CB}(\mathfrak{A, B})$ in the weak$^*$ topology on $\operatorname{CB}(\mathfrak{A, B})$. The same holds for pointwise weak and $\sigma$-weak topologies.
        \item $\operatorname{nUCP}(\mathfrak{A,B})$ is not pointwise $\sigma$-weakly closed in $\operatorname{UCP}(\mathfrak{A,B})$ (or $\mathcal{L}(\mathfrak{A,B})$) if $\mathfrak{A}$ is infinite dimensional.
        \item $\operatorname{(n)UCP}(\mathfrak{A,B})$ is CB-norm closed and uniformly CB-bounded in $(\sigma\operatorname{-)CB}(\mathfrak{A, B})$. 
    \end{enumerate}
\end{restatable}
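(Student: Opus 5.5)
I would prove the three statements in the order 3, 1, 2, since the uniform CB-bound in statement 3 is what lets Banach--Alaoglu and Proposition \ref{prop:all_topo} do the work in statement 1.

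\textbf{Statement 3.} For $\Phi\in\operatorname{UCP}(\mathfrak{A},\mathfrak{B})$ I would invoke the standard fact that a completely positive map satisfies $\|\Phi\|_\text{CB}=\|\Phi(1)\|$. This follows from the Russo--Dye type bound applied to each $\Phi^n$, or from Stinespring. Hence every UCP map has $\|\Phi\|_\text{CB}=1$, which gives uniform boundedness.

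For norm closedness, let $\Phi_n\to\Psi$ in CB norm with $\Phi_n$ UCP. Then $\Phi_n^k(X)\to\Psi^k(X)$ in norm for every $k$ and every $X\in M_k(\mathfrak{A})$. Since the positive cone of $M_k(\mathfrak{B})$ is norm closed, $\Psi$ is completely positive. Unitality passes to the limit in the same way. For the normal case, Proposition \ref{prop:closed_sigma_cont} shows that the limit lies in $\sigma\text{-CB}$, and for positive maps $\sigma$-weak continuity is the same as normality.

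\textbf{Statement 1.} By Eq.\ \eqref{eq:cb_predual}, $\operatorname{CB}(\mathfrak{A},\mathfrak{B})$ is a dual Banach space. Banach--Alaoglu therefore makes its closed unit ball weak$^*$-compact, and by statement 3 this ball contains $\operatorname{UCP}$. It then suffices to show that $\operatorname{UCP}$ is weak$^*$-closed, and for this I would use the weaker pointwise weak topology. Complete positivity says $\langle\xi,\Phi^k(X)\xi\rangle\ge 0$ for all $X\in M_k(\mathfrak{A})^+$ and $\xi\in\mathcal{K}^k$. Each such quantity is a finite sum of matrix elements $\langle\xi_i,\Phi(X_{ij})\xi_j\rangle$, so it is continuous in the pointwise weak topology. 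Likewise unitality, $\langle\psi,\Phi(1)\phi\rangle=\langle\psi,\phi\rangle$, is a pointwise weak closed condition. Hence $\operatorname{UCP}$ is closed in the pointwise weak topology, and a fortiori in the stronger topologies listed in Proposition \ref{prop:all_topo}(1).

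The ball is weak$^*$-compact, and by Proposition \ref{prop:all_topo}(2) all three weak topologies coincide on it. So $\operatorname{UCP}$, being a closed subset of a compact set, is compact in each of these topologies. Two small points need care here. First, a pointwise limit of UCP maps must lie in $\operatorname{CB}$ at all; it does, because the limit is again CP and unital, hence has CB norm 1. Second, one needs Hausdorffness, which holds since $\mathfrak{A}\circledcirc\mathfrak{B}_*$ separates points of $\operatorname{CB}$.

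\textbf{Statement 2.} I would exhibit a net of normal UCP maps converging pointwise $\sigma$-weakly to a singular one. Since $\mathfrak{A}$ is infinite dimensional, it admits a singular state $\omega$, for example a Banach limit or a weak$^*$ cluster point of normal states supported on an infinite family of orthogonal projections $p_k$. By weak$^*$ density of normal states in the state space, pick a net of normal states $\omega_\alpha\to\omega$ weak$^*$. Then set $\Phi_\alpha(O)=\omega_\alpha(O)1_\mathfrak{B}\in\operatorname{nUCP}$ and $\Phi(O)=\omega(O)1_\mathfrak{B}$. We have $\Phi_\alpha\to\Phi$ pointwise $\sigma$-weakly, yet $\Phi$ is not normal, because $\rho\circ\Phi=\rho(1)\,\omega\notin\mathfrak{A}_*$.

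The main obstacle is showing that a non-normal state exists on an arbitrary infinite-dimensional von Neumann algebra. I would first produce an infinite sequence of mutually orthogonal nonzero projections, which exists by infinite dimensionality. I would then take a weak$^*$ limit point of vector states supported on $p_k$ as $k\to\infty$. This limit annihilates every $p_k$ but not $\sum_k p_k$, so it violates complete additivity and hence normality.
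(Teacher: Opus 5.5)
Your proposal is correct. Statements 2 and 3 follow the paper's proof essentially verbatim.

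In statement 2 you use the same counterexample $O\mapsto\omega_\alpha(O)1$, built from normal states converging weak$^*$ to a singular state. You also construct the singular state explicitly from an infinite orthogonal family of projections, which the paper simply takes for granted. In statement 3 you use the same norm-closedness of the positive cones of $M_n(\mathfrak{B})$. For normality of the limit, the paper argues directly that $\|\omega\circ(\Phi_k-\Phi)\|\le\|\omega\|\,\|\Phi_k-\Phi\|_\text{CB}$ and that $\mathfrak{A}_*$ is norm-closed in $\mathfrak{A}^*$. This is the same fact you reach by citing Proposition \ref{prop:closed_sigma_cont}.

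The genuine difference is in the closedness part of statement 1. The paper takes a weak$^*$-convergent net in $\operatorname{UCP}$ and notes that the limit stays in the weak$^*$-closed CB unit ball. It then invokes the standard fact that a unital complete contraction is completely positive. Closedness in the two pointwise topologies is obtained afterwards by appeal to Proposition \ref{prop:all_topo}.

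You instead observe that complete positivity and unitality are each intersections of conditions continuous in the pointwise weak topology. So $\operatorname{UCP}$ is closed already in the weakest topology, indeed even among all linear maps into $\mathfrak{B}$, since the limit is automatically UCP and hence CB. Closedness in the stronger topologies is then automatic. You need neither the unital-contraction characterisation nor Proposition \ref{prop:all_topo} for closedness; you use Proposition \ref{prop:all_topo}(2) only to transport Banach--Alaoglu compactness, exactly as the paper does.

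Your route is more elementary and treats the three topologies uniformly. The paper's is a one-liner once the contraction characterisation of unital CP maps is taken as known.
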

\begin{proof}\label{proof:compactnesse_etc}
    \begin{enumerate}
    \item The weak$^*$ topology on $\text{CB}(\mathfrak{A,B})$ is given by the elements $\phi\in \text{CB}(\mathfrak{A,B})_*=\mathfrak {A}\widehat\otimes \mathfrak B_*$, where $\phi=\sum_i O_i\otimes \omega_i$, and $O_i\in \mathfrak{A}, \omega_i\in \mathfrak{B}_*$. 
    The Banach-Alaoglu theorem states that the unit ball in $\text{CB}(\mathfrak{A}, \mathfrak B)$ is weak$^*$ compact, i.e. in $\sigma(\text{CB}(\mathfrak{A,B}), \mathfrak {A}\widehat\otimes \mathfrak B_*)$. Since for CP maps,
            \begin{equation}\label{eq:uniform_bounded_ucp}
                ||\Phi||_\text{CB}=||\Phi||=1,
            \end{equation}
            the set $\operatorname{UCP}(\mathfrak{A,B})$ is contained in the compact ball. It follows that if $\operatorname{UCP(\mathfrak{A,B})}$ is weak$^*$-closed, then it is weak$^*$-compact. 

            We prove that statement now. Let $\Phi_n$ be a weak$^*$- convergent net in $\text{UCP}(\mathfrak{A,B})$ with limit $\Phi$. Unitality is obvious, we start with complete positivity. Since $\Phi$ is completely bounded and unital, it is completely positive if and only if it is contractive, i.e. $||\Phi||_\text{CB}\leq 1$. Since we have seen that $\text{UCP}(\mathfrak{A,B})$ is a subset of the (weak$^*$-closed) unit ball, it follows that $\Phi$ is UCP.

            By prop \ref{prop:all_topo}, the claim also holds for pointwise and $\sigma$-weak topologies.
            \item Recall that $\mathfrak{A}_*$ is not weak$^*$-closed in $\mathfrak{A}^*$. Let $\omega_n\in \mathfrak{A}_*$ be a net of normal states that converges in weak$^*$ topology to a non-normal state $\omega$. The linear maps $\Psi_n:O\mapsto \omega_n(O) I$ form a net in $\operatorname{nUCP}(\mathfrak{A,B})$. Then, since 
        \begin{equation}
            \rho(\Psi_n(O)-\Psi(O))=\rho(I)(\omega_n(O)-\omega(O))\rightarrow 0,
        \end{equation}
        $\Psi_n$ converges pointwise $\sigma$-weakly to $\Psi$, $\operatorname{nUCP}(\mathfrak{A})$ is not $\sigma$-weakly closed.
        \item Uniform boundedness comes from Eq \eqref{eq:uniform_bounded_ucp}. To see closure in CB-norm, let $\Phi_k\in \text{nUCP}(\mathfrak{A,B})$ converge in CB-norm to $\Phi$. Then, for all $n\geq 1$
        \begin{equation}
            ||(\Phi_k-\Phi)\otimes \text{id}_n||\leq ||\Phi_k-\Phi||_\text{CB},
        \end{equation}
        and so $\Phi$ is UCP. Finally, we have the following inequality 
        \begin{equation}
            ||\omega\circ (\Phi_k-\Phi)||\leq ||\omega||\,||\Phi_k-\Phi||\leq ||\omega||\,||\Phi_k-\Phi||_\text{CB}.
        \end{equation}
        Hence, $\mathfrak{A}_*\ni\omega\circ\Phi_k\rightarrow \omega\circ\Phi\in \mathfrak{A}^*$. By the norm closure of $\mathfrak{A}_*$ in $\mathfrak{A}^*$, $\omega\circ\Phi$ is normal for all $\omega\in \mathfrak{B}_*$. Hence, $\Phi$ is normal and in $\text{nUCP}(\mathfrak{A,B})$. A similar proof holds in the non-normal case. 
        \end{enumerate}

    \end{proof}
    
The above proposition shows that, despite the importance of normality/$\sigma$-weak continuity for defining meaningful quantum channels, pointwise $\sigma$-weak continuity is not a helpful topology in which to study the structure of normal quantum channels. Both CB and weak$^*$-topologies lead to closed sets of quantum channels, which will allow us to state rarity results in the later sections. 

As a final result for this section, we prove that unitary channels, which are always normal, are closed in CB norm. 
\begin{proposition}\label{prop:closure_unitary}
    The set of unitary channels $\operatorname{U}(\mathbf K)$ is CB norm closed in the set of normal channels $\operatorname{nUCP}(\mathbf K)$. 
\end{proposition}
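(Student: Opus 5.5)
Throughout, $\operatorname{U}(\mathbf K)$ is the set of channels $\operatorname{Ad}_U:O\mapsto U^\dagger O U$ with $U$ a unitary in the relevant algebra, and the ambient algebra is $\mathfrak A=\mathfrak A(\mathbf M)\subseteq\mathcal B(\mathcal H)$. The plan is to characterise unitary channels intrinsically among normal channels by a property that survives CB-norm limits. The natural choice is \emph{multiplicativity}: a UCP map $\Phi$ on a von Neumann algebra is a $*$-homomorphism exactly when $\Phi(O^\dagger O)=\Phi(O)^\dagger\Phi(O)$ for all $O$ (the multiplicative domain is everything). So the first step is this: let $\Phi_k=\operatorname{Ad}_{U_k}\in\operatorname{U}(\mathbf K)$ converge in CB norm to $\Phi$. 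By proposition \ref{prop:closed_compactness_boundedness}(3), $\Phi\in\operatorname{nUCP}(\mathbf K)$. Norm convergence gives
\begin{equation}
\|\Phi(AB)-\Phi(A)\Phi(B)\|\leq \|(\Phi-\Phi_k)(AB)\|+\|\Phi_k(A)\Phi_k(B)-\Phi(A)\Phi(B)\|\to 0,
\end{equation}
using $\|\Phi_k\|=\|\Phi\|=1$. Hence $\Phi$ is a normal unital $*$-homomorphism.

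The second step is to show that $\Phi$ is inner, i.e.\ $\Phi=\operatorname{Ad}_U$. Injectivity and surjectivity are easy. For injectivity, $\|\Phi_k(O)\|=\|O\|$ passes to the limit, so $\Phi$ is isometric. For surjectivity, the inverses $\Phi_k^{-1}=\operatorname{Ad}_{U_k^\dagger}$ satisfy
\begin{equation}
\|\Phi_k^{-1}-\Phi_l^{-1}\|_{\text{CB}}=\|\Phi_l^{-1}(\Phi_l-\Phi_k)\Phi_k^{-1}\|_{\text{CB}}\leq\|\Phi_l-\Phi_k\|_{\text{CB}},
\end{equation}
so they form a Cauchy sequence converging to some $\Psi$ with $\Psi\Phi=\Phi\Psi=\mathrm{id}$. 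Thus $\Phi$ is a normal $*$-automorphism.

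The main obstacle is that not every automorphism is inner, so we must use the approximation by inner ones. Here I would invoke the Kadison--Ringrose result that an automorphism $\alpha$ with $\|\alpha-\mathrm{id}\|<2$ is inner, implemented by a unitary in the algebra. For large $k$ we have $\|\Phi\circ\Phi_k^{-1}-\mathrm{id}\|\leq\|\Phi-\Phi_k\|<2$. Hence $\Phi\circ\Phi_k^{-1}=\operatorname{Ad}_V$ with $V$ unitary in the algebra, and $\Phi=\operatorname{Ad}_V\circ\operatorname{Ad}_{U_k}$ is again a unitary channel, with implementing unitary the appropriate product of $V$ and $U_k$ (the order depending on the Heisenberg-picture composition convention).

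If the locality data in $\operatorname{U}(\mathbf K)$ requires the unitary to lie in a specific local algebra, the same argument applies inside that algebra, since Kadison--Ringrose produces $V$ in the algebra on which the automorphism acts. If citing Kadison--Ringrose is undesirable, the fallback is a direct construction. Pick $U_k$ so that the phases stabilise, then show $U_k$ is Cauchy modulo phase. This uses the estimate $\|\operatorname{Ad}_U-\operatorname{Ad}_W\|\geq c\,\inf_\theta\|U-e^{i\theta}W\|$, obtained by evaluating on suitable operators, and it is the step I expect to be most delicate.
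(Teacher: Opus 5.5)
Your argument is correct, and it takes a genuinely different route from the paper.

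The paper works with dilations. It invokes the Kretschmann--Schlingemann--Werner continuity theorem, $\inf\|V_1-V_2\|\le\sqrt{\|\Psi_1-\Psi_2\|_{\text{CB}}}$, so that the Stinespring isometries, and hence the Kraus operators $K^n_j\to K_j$, converge in norm. It then encodes unitarity as the norm-closed condition $K_i^\dagger K_j,\,K_iK_j^\dagger\in\mathbb{C}1$.

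You instead encode unitarity algebraically. Multiplicativity and invertibility pass directly to norm limits, and innerness comes from Kadison--Ringrose. Equivalently, $\operatorname{Int}(\mathfrak A)$ contains the open norm ball of radius $2$ about $\mathrm{id}$ in $\operatorname{Aut}(\mathfrak A)$, so it is an open, hence closed, subgroup.

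What your route buys:
\begin{itemize}
\item It is intrinsic to whichever von Neumann algebra the channels act on.
\item It produces the implementing unitary inside that algebra.
\item It in fact needs only operator-norm, not CB-norm, convergence.
\end{itemize}
By contrast, the paper's criterion is really tailored to channels on all of $\mathcal B(\mathcal H)$; its last step tests commutation against every $O\in\mathcal B(\mathcal H)$. On a proper subalgebra $\mathfrak A$, the channel $\operatorname{Ad}_U$ also admits Kraus operators $W_jU$ with $W_j\in\mathfrak A'$ and $\sum_j W_j^\dagger W_j=1$, for which $K_i^\dagger K_j$ is not a scalar. The price of your route is a heavier theorem: Kadison--Ringrose together with the Sakai--Kadison theorem that derivations of von Neumann algebras are inner. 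On $\mathcal B(\mathcal H)$ this reduces to the elementary fact that every automorphism is inner.

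Two small caveats.
\begin{itemize}
\item Your locality aside is too quick. Running the argument inside $\mathfrak A(\mathbf K)$ only determines $\Phi$ on $\mathfrak A(\mathbf K)$. Running it on $\mathfrak A(\mathbf M)$ only places the Kadison--Ringrose unitary $V$ in $\mathfrak A(\mathbf K^\perp)'\cap\mathfrak A(\mathbf M)$. Localising the implementing unitary in $\mathfrak A(\mathbf K)$ therefore needs extra input, such as Haag duality or $\mathfrak A(\mathbf K)\vee\mathfrak A(\mathbf K^\perp)=\mathfrak A(\mathbf M)$. The paper does not address this point either.
\item Your fallback estimate cannot hold modulo phases outside factors, since $\operatorname{Ad}_U=\mathrm{id}$ for any central unitary $U$. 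It would have to be taken modulo central unitaries. This is moot, because Kadison--Ringrose already suffices.
\end{itemize}
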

\begin{proof}
     Let $\text{U}(\mathbf K)\ni\Phi_i\rightarrow \Phi$ in CB norm, such that $\Phi_i$ are all unitary. By the closure of normal channels under CB norm, it follows that $\Phi\in \text{nUCP}(\mathbf K)$. By \cite[Theorem 1]{kretschmann2007continuitytheoremstinespringsdilation}, for any two channels $\Psi_i$, their Stinespring isometries $V_i:\mathcal H\rightarrow \mathcal K$ obey
     \begin{equation}\label{eq:op_bound}
         \inf_{V_1,V_2}||V_1-V_2||\leq \sqrt{||\Psi_1-\Psi_2||_\text{CB}}
     \end{equation}
     Since the channels are normal it follows that we can pick $V_i:\mathcal H\rightarrow \mathcal H\otimes \mathcal J$ where $\mathcal J$ is separable, and 
     \begin{equation}
         \Psi_i(\cdot)=V_i(\cdot \otimes 1)V_i^\dagger.
     \end{equation}
     Taking $\ket{e_i}\in \mathcal J$ to be an orthonormal basis, we can define $K^i_j=(1\otimes \bra{e_j})V_i$. A sufficient and necessary condition for a channel to be unitary is that for all $i,j$,
     \begin{equation}
         K_i^n{}^\dagger K_j^n, K_i^n K_j^n{}^\dagger\in \mathbb C1.
     \end{equation}
     From the operator bound Eq \eqref{eq:op_bound}, we have
     \begin{equation}
         ||K_i-K^n_i||=||(1\otimes \bra{e_j})(V-V^n)||\leq C_j ||V-V^n||\rightarrow 0.
     \end{equation}
    It then follows that for any $O\in \mathcal B(\mathcal H)$
     \begin{equation}
          [K_i^\dagger K_j,O]=[\lim_{n }K_i^n{}^\dagger K_j^n, O]=\lim_{n }[K_i^n{}^\dagger K_j^n, O]=0,
     \end{equation}
     since the commutator is norm continuous, and likewise for $K_i K_j^\dagger$. Since $O$ is generic, both $K_i^\dagger K_j, K_i K_j^\dagger\in \mathbb C1$, and so $\Phi$ is unitary. 
\end{proof}

\section{QFTs and causality}\label{sec:qft}
A natural framework for discussing causality is the algebraic approach to QFT (AQFT). An alternative approach is based on the general boundary formalism  \cite{oeckl_local_2019, oeckl2024spectraldecompositionfieldoperators, oeckl2025causalmeasurementquantumfield, oeckl2025localcompositionalmeasurementsquantum}, which provides a contrasting and complementary way to study the causality of operations. 

For our purposes, the following broad definition of an AQFT is sufficient. 

\begin{definition}
    A QFT is a map (functor) $\mathfrak{A}$ from the set of open, causally convex subsets of a globally hyperbolic manifold $\mathbf{M}$ to the non-Abelian von Neumann subalgebras of $\mathcal{B}(\mathcal{H})$ e.g. $\mathbf{R}\mapsto \mathfrak{A}(\mathbf{R})\subseteq \mathcal{B}(\mathcal{H})$, such that 
\begin{enumerate}
    \item if $\mathbf{R}\subseteq \mathbf{S}$, then $\mathfrak{A}(\mathbf{R})\subseteq \mathfrak{A}(\mathbf{S})$.
    \item if $\mathbf{S}, \mathbf{R}$ are spacelike separated, then $[\mathfrak{A}(\mathbf{S}), \mathfrak{A}(\mathbf{R})]=0$. 
\end{enumerate}
\end{definition}

The largest algebra of interest is $\mathfrak{A}(\mathbf{M})$, which represents the set of all observables associated to the QFT. In the context of a QFT, a quantum channel is an element of $\text{UCP}(\mathfrak{A}(\mathbf{M}))=:\text{UCP}(\mathbf{M})$. Given some region $\mathbf{K}$, we denote its in and out regions by $\mathbf{K}_{\text{in/out}}\mathbf{M}\backslash (J^{+/-}(\mathbf{K}))$, where $J^\pm(\mathbf{K})$ are the set of all points to the past or future of $\mathbf{K}$ respectively. Finally, we denote spacelike separated complement $\mathbf{K}^\perp$, i.e. the set of all points that are spacelike separated from $\mathbf{K}$.

The local channels on a region $\mathbf{K}$ are the channels that act trivially on its spacelike complement, generalising the notion of a local channel, $\Phi_1\otimes \mathbb{I}_2$, from the finite dimensional model in section \ref{sec:motivation}.
\begin{definition}
    A channel $\Phi$ local to a region $\mathbf{K}$ has 
    \begin{equation}
        \Phi|_{\mathbf{K}^\perp}=\operatorname{id}.
    \end{equation}
\end{definition}
The set of local channels forms a convex set $\text{Loc}(\mathbf{K})\subseteq \text{UCP}(\mathbf{M})$, however in general the causal ordering of operations means that we only \textit{apply} the channel to subalgebras that correspond to regions that are contained in $\mathbf{K}_\text{out}$. The set of normal local channels is exactly the set of normal channels $\Phi$ such that 
\begin{equation}
    \Phi|_{\mathbf{K}^\perp}=\text{id},
\end{equation}
 and are denoted by $\text{nLoc}(\mathbf{K})\subseteq \text{nUCP}(\mathbf{M})$, see \cite{van2025pure, ruep_thesis_2022}. The normal local channels are of particular interest to us, as we explain at the end of this section, and so we show that they are closed in CB topology. These results, while quite direct, do not appear to have been stated before. We draw attention again to \cite{van2025pure}[remark 12], which provides a counterexample for closure in the pointwise ultraweak topology. 

\begin{proposition}
    Let $\mathfrak{A}$ be a QFT such that every local region is assigned a von Neumann algebra. The set of all local channels is closed in weak$^*$-topologies, and the set of all normal local channels is closed in $\operatorname{CB}$-norm. 

    \begin{proof}
       To see closure of the normal local channels in CB norm topology, we need only show that locality is preserved. Pick $O\in \mathfrak{A}(\mathbf{K}^\perp), ||O||\leq 1$, and a net $\Phi_i\in \text{nLoc}(\mathbf{K})$ that converges in CB norm to $\Phi\in \text{nUCP}(\mathbf{M})$. Then
        \begin{equation}
            ||\Phi(O)-O||_1=||(\Phi-\Phi_i)(O)||_1\leq ||\Phi-\Phi_i||_\text{CB}||O||_1\rightarrow 0
        \end{equation}
        so $\Phi\in \text{nLoc}(\mathbf{K})$. 
        
        Finally, if $\text{Loc}(\mathbf{K})\ni\Phi_i\rightarrow \Phi$ in weak$^*$ topology, then for all $O\in \mathfrak{A}(\mathbf{K}^\perp)$,
        \begin{equation}
            \rho(O-\Phi(O))=\rho(\Phi_i(O)-\Phi(O))\rightarrow 0.
        \end{equation}
        Hence, $\Phi\in \text{Loc}(\mathbf{K})$ and we have closure of local channels in weak$^*$-topology. 
    \end{proof}
\end{proposition}

Sorkin's seminal paper \cite{sorkin1993impossible} noted that the set of local channels is strictly larger than the set of causal channels. Instead, causal channels $\Phi$ local to $\mathbf{K}$ must also satisfy a generalisation of Eq \ref{eq:qm_causal}.

\begin{definition}
    A channel local to $\mathbf{K}$ is causal if \begin{equation}
    \rho\left(\Phi_\mathbf{S}(\Phi(O_\mathbf{R}))-\Phi(O_\mathbf{R})\right)=0,
\end{equation}
for all states $\rho\in \mathfrak{A}(\mathbf{M})^*$, all spacelike separated compact regions $\mathbf{S}, \mathbf{R},\mathbf{S}\in \mathbf{K}_\text{in},\mathbf{R}\in \mathbf{K}_\text{out}$, all operators $O_\mathbf{R}\in\mathfrak{A}(\mathbf{R})$, and all local channels $\Phi_\mathbf{S}\in \text{Loc}(\mathbf{S})$, see Fig \ref{fig} for the spacetime setup. 
\end{definition}
This is very similar to the case we looked at in section \ref{sec:motivation}, however we do not assume all systems are spacelike separated. Operationally, we can view $\rho$ as an initial state, $\Phi_\mathbf{S}$ as a preparation, and $O_\mathbf{R}$ as a measurement. Since the preparation and measurement are spacelike separated events, causality requires that the preparation cannot be detected, regardless of an intermediate intervention $\Phi$. 
\begin{figure}
    \centering
    \includegraphics[width=0.5\linewidth]{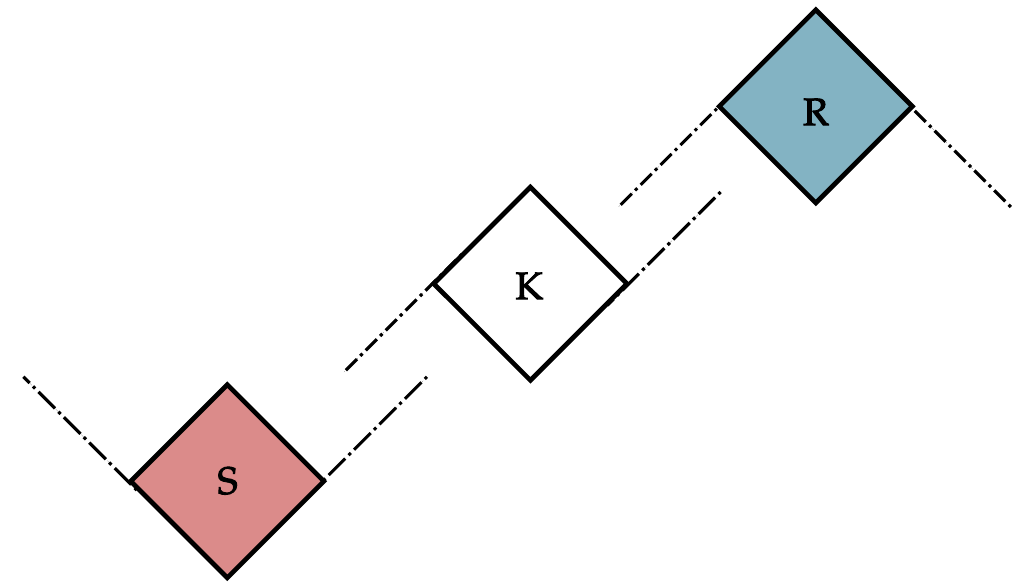}
    \caption{The spacetime setup of a Sorkin scenario is shown, where $\mathbf{S}$ is before $\mathbf{K}$, which is before $\mathbf{R}$, yet $\mathbf{S}$ and $\mathbf{R}$ are spacelike separated. The lightcones are indicated by dashed lines.}
    \label{fig}
\end{figure}
We denote the set of channels local to $\mathbf{K}$ (and normal)  that are causal (and normal) by $\text{(n)Cau}(\mathbf{K})$. 

Since we have been careful to distinguish normal states and channels, there is an important question about the robustness of Sorkin's causality condition. Specifically, we show that it is equivalent to testing the causality of a channel $\Phi$ using only normal preparations and initial states. 

\begin{restatable}{proposition}{causalforall}
\label{prop:equiv_of_conditions}
    Let $\mathfrak{A}$ be a QFT such that every local region is a von Neumann algebra. Then the following two statements are equivalent
    \begin{enumerate}
        \item $\Phi$ is causal for all normal states $\rho$, all normal channels $\Phi_\mathbf{S}$, and all $O_\mathbf{R}$.
        \item  $\Phi$ is causal for all states $\rho$, all channels $\Phi_\mathbf{S}$, and all $O_\mathbf{R}$.
    \end{enumerate}
\end{restatable}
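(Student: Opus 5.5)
The implication (2)$\Rightarrow$(1) is immediate, since normal states and normal channels are particular states and channels. The real content is (1)$\Rightarrow$(2). I would split it into two reductions: first replace arbitrary states $\rho$ by normal ones, then replace arbitrary local channels $\Phi_\mathbf{S}$ by normal ones. Throughout, $\Phi$ is the fixed local channel being tested. For a fixed $\Phi_\mathbf S$ and $O_\mathbf R$, set
\begin{equation}
X:=\Phi_\mathbf{S}(\Phi(O_\mathbf{R}))-\Phi(O_\mathbf{R})\in\mathfrak A(\mathbf M).
\end{equation}
The condition $\rho(X)=0$ for all normal states already forces $X=0$, because the normal states separate the points of a von Neumann algebra: a self-adjoint element annihilated by all vector states is zero, and one treats the real and imaginary parts of $X$ separately. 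Then $\rho(X)=0$ for every state $\rho$. So the choice of state is harmless, and the problem reduces to showing that
\begin{equation}
\Phi_\mathbf S(\Phi(O_\mathbf R))=\Phi(O_\mathbf R)
\end{equation}
holds for all channels $\Phi_\mathbf S\in\operatorname{Loc}(\mathbf S)$, given that it holds for the normal ones.

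For this second step I would approximate a general local channel $\Phi_\mathbf S$ by normal local channels and pass to the limit. Proposition \ref{prop:closed_compactness_boundedness} (statement 2) shows that normal UCP maps are pointwise $\sigma$-weakly dense-ish, via nets of the form $O\mapsto\omega_n(O)I$. That example, however, does not respect locality. So I would aim for a net $\Phi^\alpha_\mathbf S\in\operatorname{nLoc}(\mathbf S)$ with $\Phi^\alpha_\mathbf S\to\Phi_\mathbf S$ pointwise weak$^*$ on the single element $Y:=\Phi(O_\mathbf R)$. Given such a net, for every normal $\rho$,
\begin{equation}
\rho(\Phi_\mathbf S(Y))=\lim_\alpha\rho(\Phi_\mathbf S^\alpha(Y))=\rho(Y),
\end{equation}
and the previous paragraph then finishes the argument.

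A cleaner route, which I would try first, is to avoid approximation altogether. The key observation is that $Y$ lives in a region where $\Phi_\mathbf S$ acts trivially. Since $\Phi$ is local to $\mathbf K$ and $O_\mathbf R\in\mathfrak A(\mathbf R)$ with $\mathbf R$ in $\mathbf K_{\text{out}}$, I would try to show, using normality of $\Phi$ and the causal structure, that $Y$ lies in (the weak closure of) an algebra $\mathfrak A(\mathbf S')$ with $\mathbf S'\subseteq\mathbf S^\perp$. If that is not available, I would instead use the structure of UCP maps around their multiplicative domain. The point is that a channel fixing a von Neumann subalgebra pointwise is a bimodule map over it, and normal and non-normal local channels differ only through singular parts of states. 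Concretely, I would decompose each $\omega\circ\Phi_\mathbf S$, for $\omega$ normal, into normal and singular parts (Takesaki), and check that the normal part still defines a local channel after renormalisation.

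The main obstacle is exactly this approximation, or reduction, for non-normal $\Phi_\mathbf S$. The difficulty is that the natural approximating nets are only pointwise $\sigma$-weak, in which topology normal local channels are not closed (cf. \cite{van2025pure}[remark 12]), and that locality must be preserved along the net. I expect the fix to rely on $\sigma$-finiteness and on realising a general channel on the relevant subalgebra as a pointwise weak$^*$ limit of normal channels that agree with the identity on $\mathfrak A(\mathbf S^\perp)$. Such a limit would be built, for instance, by composing $\Phi_\mathbf S$ with normal conditional-expectation-like maps onto finite-dimensional or type I subfactors that are supported near $\mathbf S$ and commute with $\mathfrak A(\mathbf S^\perp)$, using split-property-style inclusions.
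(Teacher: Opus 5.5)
Your state reduction (normal states separate points, so $X=0$) is the same as the paper's. You also correctly isolate what remains: $\Phi_\mathbf S(Y)=Y$ for every $\Phi_\mathbf S\in\operatorname{Loc}(\mathbf S)$, with $Y=\Phi(O_\mathbf R)$. You even name the right sufficient condition, $Y\in\mathfrak A(\mathbf S^\perp)$. But you never prove it, and that is a genuine gap.

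You propose to derive it from normality of $\Phi$ and the causal structure, and that cannot work. Locality of $\Phi$ to $\mathbf K$ places no constraint on where $\Phi(O_\mathbf R)$ sits relative to $\mathbf S$. Sorkin's impossible operations, e.g.\ the unitary channel of Corollary~\ref{cor:existence_acausal}, are normal local channels for which $\Phi(O_\mathbf R)$ fails to commute with $\mathfrak A(\mathbf S)$, so it is not in $\mathfrak A(\mathbf S^\perp)$. The inclusion has to come from hypothesis (1).

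Your fallback routes do not close the gap:
\begin{itemize}
\item The approximating net of normal \emph{local} channels is never constructed.
\item The Takesaki normal part of $\omega\circ\Phi_\mathbf S$ need not assemble into a unital channel that is still the identity on $\mathfrak A(\mathbf S^\perp)$. It also discards exactly the singular contribution to $\omega(\Phi_\mathbf S(Y))$ that you would need to control.
\item $\sigma$-finiteness and the split property are not hypotheses of this proposition.
\end{itemize}

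The missing idea is to apply (1) to a specific family of normal local channels: the unitary channels $O\mapsto U^\dagger O U$ with $U$ a unitary in $\mathfrak A(\mathbf S^\perp)'$. These are normal and act as the identity on $\mathfrak A(\mathbf S^\perp)$. After your state reduction, (1) therefore gives $U^\dagger Y U=Y$, i.e.\ $[U,Y]=0$, for all such $U$. Since the unitaries of $\mathfrak A(\mathbf S^\perp)'$ span it, the bicommutant theorem gives $Y\in\mathfrak A(\mathbf S^\perp)''=\mathfrak A(\mathbf S^\perp)$. This is exactly where the assumption that local algebras are von Neumann algebras is used.

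Then \emph{every} $\Phi_\mathbf S\in\operatorname{Loc}(\mathbf S)$, normal or not, fixes $Y$ by the definition of locality. Hence $\rho(\Phi_\mathbf S(Y)-Y)=0$ for all states $\rho$, and no approximation of non-normal channels is needed. As in the paper, this tacitly requires $O\mapsto U^\dagger O U$ to map $\mathfrak A(\mathbf M)$ into itself, which holds e.g.\ when $\mathfrak A(\mathbf M)=\mathcal B(\mathcal H)$.
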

\begin{proof}\label{proof:qft_equiv}
        We note that $(2)\implies (1)$ is trivial. For the converse direction, we start by assuming that $\Phi$ is causal for all normal states $\rho$, all normal channels $\Phi_\mathbf{S}$, and all $O_\mathbf{R}$. 
        
        The normal states of a von Neumann algebra are separating, so $\rho(O)=0$ for all normal states if and only if $O=0$. Hence, if 
        \begin{equation}
            \rho\left(\Phi_\mathbf{S}(\Phi(O_\mathbf{R}))-\Phi(O_\mathbf{R})\right)=0,
        \end{equation}
        for all normal $\rho$, then $\Phi_\mathbf{S}(\Phi(O_\mathbf{R}))=\Phi(O_\mathbf{R})$. Demanding causality for all normal states implies $\Phi_\mathbf{S}(\Phi(O_\mathbf{R}))=\Phi(O_\mathbf{R})$ for all normal channels $\Phi_\mathbf{S}$ and all $O_\mathbf{R}$. 

        Demanding causality with respect to all normal channels $\Phi_S$ and all operators $O_\mathbf{R}$ includes demanding causality with respect to all unitary channels $\Phi_S(\cdot)=U^\dagger \cdot U$ with $U\in \mathfrak{A}(\mathbf{S}^\perp)'$, and all self-adjoint operators $O_\mathbf{R}$. Hence, \begin{equation}\label{eq:commutation_condition}
            U^\dagger \Phi(O_\mathbf{R})U=\Phi(O_\mathbf{R}) , \;\forall U, O_\mathbf{R}  \iff  [U, \Phi(O_\mathbf{R})]=0\;\forall U, O_\mathbf{R} \iff  [\mathfrak{A}(\mathbf{S^\perp})', \Phi(\mathfrak{A}(\mathbf{R}))]=0,
        \end{equation}
        where we have used that both the unitaries and the self adjoint operators can be used to linearly span von Neumann algebras. The above commutation relation implies that $\Phi(\mathfrak{A}(\mathbf{R}))\subseteq \mathfrak{A}(\mathbf{S}^\perp)$. Hence, for any $\Phi_\mathbf{S}\in \text{Loc}(\mathbf{S})$,
        \begin{equation}\label{eq:operator}        \Phi_\mathbf{S}(\Phi(O_\mathbf{R}))=\Phi(O_\mathbf{R}).
        \end{equation}
        From this final expression, $(2)$ follows immediately. 
        \end{proof}

While there are good reasons to believe that normal channels are the only ones that appear in QFT---both mathematical (the FV framework predicts normal channels)\cite{simmons_brukner}, and conceptual (because only normal channels have a Schrödinger picture)---the above result provides some reassurance that we are not being overly specific. We could imagine that some channels could be ruled out by testing against non-normal preparation channels and non-normal states. However, we have shown in proposition \ref{prop:equiv_of_conditions} that no such problem occurs. If a channel is causal with respect to normal preparations and states (statement $1$ in proposition \ref{prop:equiv_of_conditions}), it is causal with respect to all preparations and states (statement $2$ of \ref{prop:equiv_of_conditions}), and vice-versa.

\section{Causality is rare in QFT}\label{sec:rare}
We are now ready to return to our original question: \textit{how rare is causality?} As we have discussed, it is tempting to want to apply the results of section \ref{sec:motivation} directly to QFT, however the local and causal unitary groups are not locally compact (in any useful topology) and so Steinhaus-Weil does not apply, nor does a Haar measure exist. Instead, we recall some notions from topology and descriptive set theory that provide a precise definition of when a subset is ``rare". For a complete introduction, see \cite{Kechris2012-jm}, particularly sections $1.8,1.9$.

A subset $S\subset X$ of a topological space $X$ is described as nowhere dense, or \textit{rare}, if its closure has an empty interior i.e. $\text{int}_X (\overline{S}^X)=\emptyset$. An example is the integers as a subset of the reals, every integer is entirely separated from every other integer. A meagre set $\Sigma$ is a countable union $\Sigma=\bigcup_{i\in I} S_i$ of nowhere dense sets $S_i$, and so the rational numbers are a meagre set in the reals. Given two subsets $S,T\subseteq X$, the symmetric difference is $S\Delta T:=S\cup T\backslash S\cap T$. We say a subset $S$ has the Baire property if its symmetric  difference from an open set is a meagre set, i.e. if $S\Delta U$ is a meagre set for some open set $U$. 

Since these definitions are not in common usage within the quantum information community, we collect a few standard properties of nowhere dense sets, meagre sets, and sets with the Baire property. While it is enough for our purposes to simply state these results, we provide proofs for completeness.
\begin{proposition}\label{prop:descriptive_set_theory}

    \begin{enumerate}
        \item The boundary of any closed subset $S\subseteq X$ is nowhere dense in $X$.
        \item Any closed subset has the Baire property.
        \item Any strict closed subspace of a normed vector space $V$ is nowhere dense in $V$.
        \item Let $X$ be a topological vector space, $L\subset X$ a closed linear subspace, and $C\subset X$ a convex subset. If $C\not\subset L$, then $C\cap L$ has empty relative interior in $C$. Hence $C\cap L$ is nowhere dense in $C$, provided it is closed in $C$.
    \end{enumerate}

    \begin{proof}
        \begin{enumerate}
            \item Let $U$ be an open subset of $\partial S=S\backslash \text{int}_X(S)$. Then, $U\subseteq \text{int}_X(S)$ as it is open, and $\partial S\subseteq S$ by closure. Hence, $U\subseteq \text{int}_X(S)\cap \partial S=\emptyset$ by definition. We can conclude that $\text{int}_X(\partial S)=\text{int}_X(\overline{\partial S}^X)=\emptyset$.
            \item Let $S\subseteq X$ be a closed subset. Then $\partial S:=S\backslash \text{int}_X (S)=(S\cup \text{int}_X (S))\backslash (S\cap \text{int}_X (S))=\text{int}_X (S)\Delta S$. Since $\text{int}_X (S)$ is open, there exists an open set such that the symmetric difference of it with $S$ is meagre by part $1$.
            \item Let $V$ be a normed vector space, and $W\subsetneq V$ be a closed strict subspace. Since $W$ is closed, it is nowhere dense if 
            \begin{equation}
                \text{int}_V(\overline{W}^V)=\text{int}_V(W)=\emptyset
            \end{equation}
            Assuming that $W$ is not nowhere dense, we can find an open ball $B(w,r)\subset W$ for some $w\in \text{int}_X(W)\subseteq W, r>0$. However, we also have that $B(w,r)-w=B(0,r)\subset W$. By linearity, any subspace containing a neighbourhood of the origin is $V$, so $W=V$, in contradiction with the assumption that it is a strict subspace. Hence, $W$ is nowhere dense.
            \item Suppose $x\in C\cap L$ is a relative interior point. Then there exists an open $U\subset X$ such that $x\in U, x\in U\cap C\subset L$. Since $C\not\subset L$, let $y\in C\backslash L$ and define
            \begin{equation}
                x_t=x(1-t)+ty\in C.
            \end{equation}
            For some $t'>0$, we have $x_{t'}\in U$, so $x_{t'}\in U\cap C \subset L$. However, by linearity, 
            \begin{equation}
                t'y=x_{t'}-(1-t')x\in L,\;  y\in L
            \end{equation}
            which is a contradiction. 
        \end{enumerate}
    \end{proof}
\end{proposition}

We have now, finally, developed and reviewed enough tools to state and prove our main result, which demonstrates that the set of causal (normal) channels local to some region is a rare subset of all (normal) channels local to that region.

\nonunitarystuff*
\begin{proof}
    Consider $X=\text{CB}(\mathbf{M})$ or $\sigma$-$\text{CB}(\mathbf{M})$, equipped with either weak$^*$ or CB norm topology respectively, and the appropriate subspace $C=\text{Loc}(\mathbf{K})$ or $\text{nLoc}(\mathbf{K})$ respectively. It is easy to verify that  $C\subseteq X$ is a convex set. 

    Next, we consider the function
    \begin{equation}\label{eq:causality_constraint_CB}
    \Gamma_{\Phi_\mathbf{S},O_\mathbf{R}}^\rho:\Psi\mapsto \rho(\Phi_\mathbf{S}(\Psi(O_\mathbf{R}))-\Psi(O_\mathbf{R})).
        \end{equation}
        and 
        \begin{equation}
            L:=\bigcap_{\substack{\rho\in \mathfrak{A}(\mathbf{M})_*}}\bigcap_{\substack{\mathbf{S}\perp \mathbf{R} ,\\
            \mathbf{S}\in \mathbf{K}_\text{in},\\ \mathbf{R}\in \mathbf{K}_\text{out}}}\bigcap_{\substack{\Phi_\mathbf{S}\in \text{Loc}(\mathbf{S}), \\O_\mathbf{R}\in \mathfrak{A}(\mathbf{R})}}\text{ker }\Gamma^\rho_{\Phi_\mathbf{S},O_\mathbf{R}},
        \end{equation}
    which is clearly a linear space. The space of (normal) causal channels is given by $C\cap L$. By prop \ref{prop:descriptive_set_theory}.4, it is sufficient to show that $L$ is closed and a proper subset. Since $L$ is an intersection of kernels, it follows that it is closed if $\Gamma^\rho_{\Phi_\mathbf{S},O_\mathbf{R}}$ is continuous in the respective topologies, which is easy to verify\footnote{In the case of normal channels, we must restrict to normal $\rho, \Phi_\mathbf S$.}. Since there exists an acausal channel by assumption, $L$ is closed and strict, and so the set of (normal) causal quantum channels is nowhere dense in the set of (normal) local quantum channels in the (CB norm) weak$^*$ topology. 

    Finally, we recall prop \ref{prop:all_topo} and the fact that $||\Psi||_\text{CB}\leq 1$ for all quantum channels to conclude that the nowhere dense property also holds for causal quantum channels in the set of local quantum channels in the pointwise weak and $\sigma$-weak topologies.

\end{proof}

The application of the above theorem relies on the fact that at least one acausal channel exists. In the free real scalar case, this follows rather directly, see \cite{PhysRevD.105.025003} for numerous examples, and corollary \ref{cor:existence_acausal} for application of those results to our case.

From theorem \ref{thrm:causal_meagre} we see that causality is a rare property of a generic local channel. This sharply formalises the intuition one gets by trying to construct causal channels in QFT, where most guesses fail to be causal. Note however, that since quantum channels form a convex set, the set of causal channels is still connected even though it is nowhere dense in the set of all local channels. An example of such a subset is the halfline $\mathbb{R}^+\vec{v}\subseteq \mathbb{R}$ parallel to $\vec{v}$, which is both nowhere dense and connected, as it is a convex set. 

Again, while we have no general proof that there is always an acausal channel, we can state the following in the case of a real scalar field. 
\begin{corollary}\label{cor:existence_acausal}
    Given any globally hyperbolic spacetime $\mathbf{M}$, with a real scalar QFT $\mathfrak{A}$, and any compact subregion $\mathbf{K}$, the set of causal channels is nowhere dense in the set of local channels.
    \begin{proof}
        Let $\mathbf{K}$ be compact. The local algebras for a real scalar field are hyperfinite type $\text{III}_1$ algebras, so to apply theorem \ref{thrm:causal_meagre}, we need only show that there is an acausal unitary in $\operatorname{U}(\mathbf{K})$. By a mild adaptation of claims 10.5 and 10.7 of \cite{albertini2023ideal}, there exists an $f$ supported in $\mathbf{K}$, and $h,g$ spacelike separated and supported in $\mathbf{K}_\text{in/out}$ respectively, such that 
        $\Delta(f,g)\neq 0\neq \Delta(f,h)$. Then, by section III.C \cite{PhysRevD.105.025003}, the unitary $e^{i\phi(f)^2}\in \operatorname{U}(\mathbf{K})$ is acausal, as
        \begin{equation}
            e^{i\lambda \phi(h)} e^{i\phi(f)^2} \phi(g)e^{-i\phi(f)^2}e^{-i\lambda \phi(h)}=\phi(g)-2\Delta(f,g)(\phi(f)+\lambda\Delta(f,h))
        \end{equation}
        depends explicitly on $\lambda$, and so can communicate.
    \end{proof}
\end{corollary}
We see nothing special about the real scalar field, other than its simplicity, and expect similar results to hold for a broad class of QFTs, including interacting ones. 

\section{Conclusions}
We have studied the topology of causal channels in QFT and QM. Starting with a number of structural results relating to the Banach space structure of completely bounded maps, we further specialised to UCP and causal UCP channels. The closure of the $\sigma$-weakly continuous completely bounded maps in CB-norm topologies was used to prove that causal normal channels are nowhere dense in the set of all normal channels. This sharpens Sorkin's impossible operations: the set of local channels that are causal is \textit{nowhere dense}. 

As we have mentioned, we have allowed ourselves two simplifications: normal channels, and von Neumann local algebras. While it may be possible that normality can be relaxed, there are results that suggest that something else must take its place, as without normality, commutation of spacelike separated local channels is not guaranteed \cite{simmons_brukner}. Likewise, $C^*$-algebras are a standard generalisation of von Neumann QFTs, however they lack many of the nice topological properties we have exploited. We will leave the possibility of relaxing these assumptions to future work. 

The preceding results concern abstract normal channels and unitary elements of the local von Neumann algebra. They do not by themselves determine which of these channels arise from specific dynamical models, such as FV-type couplings, perturbative interaction Hamiltonians, or time ordered exponentials of Wick polynomials. It remains open whether common QFT coupling models generate a large, dense, or meagre subset of the (normal) causal channels.

\begin{acknowledgments}
This research was funded in whole or in part by the Austrian Science Fund (FWF) [10.55776/COE1]. For open access purposes, the authors have applied a CC BY public copyright
license to any author accepted manuscript version arising from this submission. 
The author is grateful to Robert Oeckl and  Maria Papageorgiou for helpful feedback and encouragement regarding this manuscript. 
\end{acknowledgments}

\bibliography{references-2, refs2}
\end{document}